\documentclass[sigconf, nonacm]{acmart}

%% The following content must be adapted for the final version
% paper-specific
\newcommand\vldbdoi{XX.XX/XXX.XX}
\newcommand\vldbpages{XXX-XXX}
% issue-specific
\newcommand\vldbvolume{19}
\newcommand\vldbissue{4}
\newcommand\vldbyear{2025}
% should be fine as it is
\newcommand\vldbauthors{\authors}
\newcommand\vldbtitle{\shorttitle} 
% leave empty if no availability url should be set
\newcommand\vldbavailabilityurl{URL_TO_YOUR_ARTIFACTS}
% whether page numbers should be shown or not, use 'plain' for review versions, 'empty' for camera ready
\newcommand\vldbpagestyle{empty} 

\settopmatter{printacmref=false}
\setcopyright{none}
\AtBeginDocument{%
  \providecommand\BibTeX{{%
    \normalfont B\kern-0.5em{\scshape i\kern-0.25em b}\kern-0.8em\TeX}}}

\usepackage[scaled=.8]{beramono}
\usepackage{booktabs}
\usepackage{color}
\usepackage{colortbl}
\usepackage{tabularray}
\usepackage{float}
\AtBeginDocument{%
  \providecommand\BibTeX{{%
    \normalfont B\kern-0.5em{\scshape i\kern-0.25em b}\kern-0.8em\TeX}}}

\usepackage{makecell}
\usepackage{soul}
\usepackage{bbm}
\usepackage{amsthm}
\usepackage{balance}
\usepackage{multirow}
\usepackage[ruled,linesnumbered,vlined]{algorithm2e}
\usepackage{tcolorbox}
\usepackage[noend]{algpseudocode}
\usepackage[flushleft]{threeparttable}
\usepackage{fbox}
\usepackage{xcolor}
\usepackage{wasysym}
\usepackage{booktabs}
\usepackage{verbatim}
\usepackage{pgfplots}
\pgfplotsset{compat=1.18}
\usepackage{hyperref}
\usepackage{fontawesome}
\usepackage{tikz}
\usepackage{graphics}
\usepackage{multirow}
\usepackage{fancybox}
\usepackage{listings}
\usepackage{enumitem}
\setlength{\floatsep}{10pt}
\setlength{\textfloatsep}{10pt}
\setlength{\intextsep}{10pt}
\usetikzlibrary{patterns}
\usepackage{amsmath, amsfonts}
\usepackage{graphicx}
\usepackage{subfigure}
\usepackage{textcomp}
\usepackage{xcolor}
\usepackage{cleveref}
% \textfloatsep 2mm plus 2mm \intextsep 2mm plus 2mm
% \renewcommand{\baselinestretch}{0.975}
\newtheorem{problem}{Problem}

\renewcommand\footnotetextcopyrightpermission[1]{}
\definecolor{add}{rgb}{0.2,0.4,0.6}

%% These commands are for a PROCEEDINGS abstract or paper.
% \acmConference[SIGMOD 2025]{ACM SIGMOD/PODS International Conference on Management of Data}{June 2025}{Berlin, Germany}
%
%  Uncomment \acmBooktitle if th title of the proceedings is different
%  from ``Proceedings of ...''!
%
% \acmBooktitle{Woodstock '18: ACM Symposium on Neural Gaze Detection,
%  June 03--05, 2018, Woodstock, NY} 
% \acmPrice{15.00}
% \acmISBN{978-1-4503-XXXX-X/18/06}

%%
%% Submission ID.
%% Use this when submitting an article to a sponsored event. You'll
%% receive a unique submission ID from the organizers
%% of the event, and this ID should be used as the parameter to this command.
%%\acmSubmissionID{123-A56-BU3}

%%
%% For managing citations, it is recommended to use bibliography
%% files in BibTeX format.
%%
%% You can then either use BibTeX with the ACM-Reference-Format style,
%% or BibLaTeX with the acmnumeric or acmauthoryear sytles, that include
%% support for advanced citation of software artefact from the
%% biblatex-software package, also separately available on CTAN.
%%
%% Look at the sample-*-biblatex.tex files for templates showcasing
%% the biblatex styles.
%%

%%
%% The majority of ACM publications use numbered citations and
%% references.  The command \citestyle{authoryear} switches to the
%% "author year" style.
%%
%% If you are preparing content for an event
%% sponsored by ACM SIGGRAPH, you must use the "author year" style of
%% citations and references.
%% Uncommenting
%% the next command will enable that style.
%%\citestyle{acmauthoryear}

%%
%% end of the preamble, start of the body of the document source.
\begin{document}

%%
%% The "title" command has an optional parameter,
%% allowing the author to define a "short title" to be used in page headers.

\title{Scalable Approximate Biclique Counting over~Large~Bipartite~Graphs}

%%
%% By default, the full list of authors will be used in the page
%% headers. Often, this list is too long, and will overlap
%% other information printed in the page headers. This command allows
%% the author to define a more concise list
%% of authors' names for this purpose.

%%
%% The abstract is a short summary of the work to be presented in the
%% article.

%%
%% The code below is generated by the tool at http://dl.acm.org/ccs.cfm.
%% Please copy and paste the code instead of the example below.
%%
% \begin{CCSXML}
% <ccs2012>
%  <concept>
%   <concept_id>10010520.10010553.10010562</concept_id>
%   <concept_desc>Computer systems organization~Embedded systems</concept_desc>
%   <concept_significance>500</concept_significance>
%  </concept>
%  <concept>
%   <concept_id>10010520.10010575.10010755</concept_id>
%   <concept_desc>Computer systems organization~Redundancy</concept_desc>
%   <concept_significance>300</concept_significance>
%  </concept>
%  <concept>
%   <concept_id>10010520.10010553.10010554</concept_id>
%   <concept_desc>Computer systems organization~Robotics</concept_desc>
%   <concept_significance>100</concept_significance>
%  </concept>
%  <concept>
%   <concept_id>10003033.10003083.10003095</concept_id>
%   <concept_desc>Networks~Network reliability</concept_desc>
%   <concept_significance>100</concept_significance>
%  </concept>
% </ccs2012>
% \end{CCSXML}

% \ccsdesc[500]{Computer systems organization~Embedded systems}
% \ccsdesc[300]{Computer systems organization~Redundancy}
% \ccsdesc{Computer systems organization~Robotics}
% \ccsdesc[100]{Networks~Network reliability}
\textfloatsep 1mm plus 1mm \intextsep 1mm plus 1mm

\author{Jingbang Chen}
\authornotemark[1]
\email{chenjb@cuhk.edu.cn}
\affiliation{%
  \institution{CUHK-Shenzhen \& SLAI}
\country{}
}

\author{Weinuo Li}
\authornotemark[1]
\email{liweinuo@zju.edu.cn}
\affiliation{%
  \institution{Zhejiang University}
\country{}
}

\author{Yingli Zhou}
\authornote{The first three authors contributed equally to this research.}
\email{yinglizhou@link.cuhk.edu.cn}
\affiliation{%
  \institution{CUHK-Shenzhen}
\country{}
}

\author{Hangrui Zhou}
\email{zhouhr23@mails.tsinghua.edu.cn}
\affiliation{%
  \institution{Tsinghua University}
\country{}
}

\author{Qiuyang Mang}
\email{qiuyangmang@link.cuhk.edu.cn}
\affiliation{%
  \institution{CUHK-Shenzhen}
\country{}
}

\author{Can Wang}
\email{wcan@zju.edu.cn}
\affiliation{%
  \institution{Zhejiang University}
\country{}
}
\author{Yixiang Fang}
\email{fangyixiang@cuhk.edu.cn}
\affiliation{%
  \institution{CUHK-Shenzhen}
\country{}
}

\author{Chenhao Ma}
\authornote{Chenhao Ma is the corresponding author.}
\email{machenhao@cuhk.edu.cn}
\affiliation{%
  \institution{CUHK-Shenzhen}
\country{}
}

%%
%% Keywords. The author(s) should pick words that accurately describe
%% the work being presented. Separate the keywords with commas.
% \keywords{temporal graph/network, butterfly counting, indexing}  %% \keywords are mandatory in final camera-ready submission

%% A "teaser" image appears between the author and affiliation
%% information and the body of the document, and typically spans the
%% page.

% \received{20 February 2007}
% \received[revised]{12 March 2009}
% \received[accepted]{5 June 2009}

\captionsetup{font=small}
\setlength{\abovecaptionskip}{2.5pt}   % 标题与图像上方内容之间的距离
\setlength{\belowcaptionskip}{3pt}   % 标题与图像下方内容之间的距离
\setlength{\textfloatsep}{2pt}
\setlength{\floatsep}{2pt}
% \everymath{\small}       % 对所有内联公式使用小一号的字体
% \everydisplay{\small}   

\newcommand{\todo}[1]{\textcolor{red}{[todo: #1]}}
\newcommand{\jb}[1]{\textcolor{blue}{[cjb: #1]}}
\newcommand{\yu}[1]{\textcolor{green}{[gy: #1]}}
\newcommand{\mqy}[1]{\textcolor{violet}{[mqy: #1]}}
\newcommand{\mqytext}[1]{\textcolor{violet}{#1}}
\newcommand{\hhz}[1]{\textcolor{teal}{[hhz: #1]}}
\newcommand{\mch}[1]{\textcolor{purple}{[mch: #1]}}
\newcommand{\zhou}[1]{\textcolor{magenta}{[zhou: #1]}}
\newcommand{\pr}[2]{\mathbb{P}_{#1}\left[#2\right]}
\newcommand{\expec}[2]{\mathbb{E}_{#1}\left[#2\right]}
\newcommand{\Otil}{\widetilde{O}}
\newcommand{\modify}[1]{\textcolor{blue}{#1}}
\begin{abstract}
Counting \((p,q)\)-bicliques in bipartite graphs is crucial for a variety of applications, from recommendation systems to cohesive subgraph analysis. 
Yet, it remains computationally challenging due to the combinatorial explosion to exactly count the \((p,q)\)-bicliques. 
In many scenarios, e.g., graph kernel methods, however, exact counts are not strictly required. 
To design a scalable and high-quality approximate solution, we novelly resort to \emph{\((p,q)\)-broom}, a special spanning tree of the $(p,q)$-biclique, which can be counted via graph coloring and efficient dynamic programming.
Based on the intermediate results of the dynamic programming, we propose an efficient sampling algorithm to derive the approximate $(p,q)$-biclique count from the \((p,q)\)-broom counts.
Theoretically, our method offers unbiased estimates with provable error guarantees. 
Empirically, our solution outperforms existing approximation techniques in both accuracy (up to 8$\times$ error reduction) and runtime (up to 50$\times$ speedup) on nine real-world bipartite networks, providing a scalable solution for large-scale \((p,q)\)-biclique counting.
\end{abstract}

\maketitle

%%% do not modify the following VLDB block %%
%%% VLDB block start %%%
\pagestyle{\vldbpagestyle}
\begingroup\small\noindent\raggedright\textbf{PVLDB Reference Format:}\\
\vldbauthors. \vldbtitle. PVLDB, \vldbvolume(\vldbissue): \vldbpages, \vldbyear.\\
\href{https://doi.org/\vldbdoi}{doi:\vldbdoi}
\endgroup
\begingroup
\renewcommand\thefootnote{}\footnote{\noindent
This work is licensed under the Creative Commons BY-NC-ND 4.0 International License. Visit \url{https://creativecommons.org/licenses/by-nc-nd/4.0/} to view a copy of this license. For any use beyond those covered by this license, obtain permission by emailing \href{mailto:info@vldb.org}{info@vldb.org}. Copyright is held by the owner/author(s). Publication rights licensed to the VLDB Endowment. \\
\raggedright Proceedings of the VLDB Endowment, Vol. \vldbvolume, No. \vldbissue\ %
ISSN 2150-8097. \\
\href{https://doi.org/\vldbdoi}{doi:\vldbdoi} \\
}\addtocounter{footnote}{-1}\endgroup
%%% VLDB block end %%%

%%% do not modify the following VLDB block %%
%%% VLDB block start %%%
\ifdefempty{\vldbavailabilityurl}{}{
\vspace{.3cm}
\begingroup\small\noindent\raggedright\textbf{PVLDB Artifact Availability:}\\
The source code, data, and/or other artifacts have been made available at \url{https://github.com/lwn16/Biclique}.
\endgroup
}
%%% VLDB block end %%%

\section{Introduction}

The bipartite graph stands as a cornerstone in graph mining, comprising two distinct sets of vertices where edges exclusively link vertices from different sets. 
This concept serves as a powerful tool for modeling relationships across various real-world domains, including recommendation networks \cite{chen2020social}, collaboration networks \cite{beutel2013copycatch}, and gene coexpression networks \cite{kaytoue2011mining}. 
For example, in recommendation networks, users and items typically constitute two distinct vertex types. The interactions between vertices form a bipartite graph, with edges representing users' historical purchase behaviors.

In the realm of bipartite graph analysis, counting $(p,q)$-bicliques has attracted much attention. A $(p,q)$-biclique is a complete subgraph between two distinct vertex sets of size $p$ and $q$. The $(2,2)$-bicliques, also known as butterflies, have played an important role in the analysis of bipartite networks~\cite{mang2024efficient,wang2014rectangle,wang2019vertex}. Furthermore, there are more scenarios in which $p,q$ is not fixed to $2$. 
% \zhou{Maybe more applications, and not first introduce the densest subgraph, since pq-clique densest subgraph is not an important problem}.\mch{Any GNN or learning tasks can be enhanced by clique counts? if so, can add here}\jb{add some from our previous vldb work, but not sure about GNN...} 
In general, counting $(p,q)$-bicliques serves as a fundamental operator in many applications, including cohesive subgraph analysis~\cite{borgatti1997network} and information aggregation in graph neural networks~\cite{yang2021p}, and densest subgraph mining~\cite{mitzenmacher2015scalable}. In higher-order bipartite graph analysis, the clustering coefficient is the ratio between the counts of $(p,q)$-bicliques and $(p,q)$-wedges, where counting $(p,q)$-wedges can be reduced to counting $(p,q)$-bicliques~\cite{ye2023efficient}.

Despite its importance, counting $(p,q)$-bicliques is very challenging due to its exponential increase with respect to $p$ and $q$~\cite{yang2021p}. For example, in one of the graph datasets \texttt{Twitter}, with fewer than $2 \times 10^6$ edges, there are more than $10^{13}$ $(5,4)$-bicliques and more than $10^{18}$ $(6,3)$-bicliques within. Therefore, enumeration-based counting methods including \texttt{BCList++}~\cite{yang2021p}, \texttt{EPivoter}~\cite{ye2023efficient} that produce the exact solution are not scalable. On the other hand, in many applications of biclique counting, an approximate count is often sufficient. In \textbf{graph kernel methods}~\cite{sheshbolouki2022sgrapp}, motifs serve as the basis for defining similarity measures between graphs in tasks like classification and anomaly detection. Since graph kernels depend on relative similarities rather than exact counts, approximate counts can preserve kernel performance while significantly reducing the computational overhead. This efficiency facilitates similarity computations in various fields:

\begin{itemize}[leftmargin=*]
    \item \textbf{Bioinformatics}: Zhang et al. \cite{zhang2014finding} proposed algorithms to enumerate maximal bicliques in bipartite graphs derived from functional genomics (genes vs. gene-sets), enabling the identification of co-regulatory modules. A high number of bicliques in certain regions often indicates dense regulatory relationships, which can be helpful for detecting such modules. However, full enumeration quickly becomes infeasible as graphs grow larger or the number of bicliques explodes. This limitation motivates our work on scalable approximate biclique counting, which aims to estimate biclique numbers (or those of particular sizes) accurately without exhaustive enumeration.
    \item \textbf{Cybersecurity}: In e-commerce fraud detection, bicliques also serve as key indicators of suspicious activity. For example, Wu et al. \cite{wu2024accelerating} show that in Taobao’s large transaction network, click-farming attacks can be modeled as large bicliques. Similarly, Ban \cite{ban2018finding} introduces the Maximal Half-Isolated Biclique problem to detect synchronized fraudulent behavior in customer–product graphs. These studies highlight that high biclique counts—or the presence of unusually large bicliques—in specific regions of a bipartite graph can be strong signals of fraud.
    \item \textbf{Data Mining}: In recommendation systems~\cite{wang2014rectangle}, biclique counting helps uncover dense substructures among users or items, such as groups of users with shared interests or items commonly co-purchased. In large-scale environments such as e-commerce or streaming platforms, allowing a small error margin for counting can significantly reduce the runtime cost while still preserving the effectiveness of the downstream tasks.
\end{itemize}
From this perspective, we can see that developing algorithms that produce approximate answers to $(p,q)$-biclique counting is more practical and more applicable to large-scale data. Ye et al. propose a sampling-based method called \texttt{EP/Zz++} for approximate counting $(p,q)$-bicliques~\cite{ye2023efficient}. However, its precision is not satisfactory as $p,q$ increases. For example, when querying $(5,9)$-bicliques in the data set \texttt{Twitter}, \texttt{EP/Zz++} could produce an answer of the error ratio of more than $200\%$.\footnote{Let $C$ and $\hat{C}$ be the exact and approximate counts of the $(p,q)$-biclique, respectively. The estimation error ratio is defined as $\frac{|\hat{C}-C|}{C}$.}
% \mch{briefly explain how the error ratio is computed? estimated / ground-truth?}.\jb{i actually want to add a toygraph here. let's wait til then to add that} 
This shows the need to develop a better algorithm for approximate $(p,q)$-biclique counting, aiming to improve scalability and accuracy.

In this paper, we propose a new sampling-based method that produces a high-accuracy approximate counting of $(p,q)$-bicliques. We first adapt the coloring trick in this problem, inspired by several counting methods of $k$-cliques in general graphs~\cite{li2020ordering,ye2022lightning}. Then, we design a special type of subgraph that has a strong correlation with $(p,q)$-biclique, named $(p,q)$-brooms. The $(p,q)$-broom is a special type of spanning tree in a $(p,q)$-biclique, which has a fixed structure that can be taken advantage of when counting the amount.
Using dynamic programming, we can count this motif efficiently and precisely. It also naturally adapts to the coloring.  
Finally, we develop a sampling method that takes advantage of the coloring and the intermediate result of dynamic programming. 
Compared to the h-zigzag pattern proposed by \texttt{Ep/Zz++}, our $(p,q)$-brooms have a structure that is closer to \((p,q)\)-cliques, and they usually have a smaller amount in the graph. As a result, it provides a better error guarantee when sampling \((p,q)\)-bicliques from them. Our algorithm is still heuristic since it includes a graph-coloring process, and the sampling time analysis contains a ratio that could be exponential. However, these processes are shown to be very efficient in practice. We empirically evaluate our algorithm against state-of-the-art exact and approximate counting algorithms on nine real-world datasets. All experimental results demonstrate the effectiveness of our algorithm. Besides the mathematical analysis, the high-level idea here is the sparsification. Since $(p,q)$-bicliques in real-world datasets contain realistic meanings, finding a sparsifier that is likely to preserve the information within bicliques is a promising approach to reducing the computational cost in practice. On the other hand, the mathematical analysis of unbiasedness and sampling time ensures that we can use this algorithm in practice. We can set the sampling time to a large number, allowing us to withstand the overall cost. Or we can iteratively increase the sampling time and terminate while the result shows a convergence trend.
% \mch{explain $(p,q)$-broom in high-level?}\jb{done} 
% \zhou{Maybe some explain about why we are better than EPZZ++?}\jb{not sure what to say yet. I prefer not to say it here?}

%\todo{chenhao: review/write}
Our contributions are summarized as follows:
\begin{itemize}[leftmargin=*]
    \item We design $(p,q)$-broom, a special type of spanning tree in a $(p, q)$-biclique, which can be efficiently counted via dynamic programming.
    \item Based on the dynamic programming result of counting $(p,q)$-brooms, we develop an efficient sampling-based algorithm that computes a high-accuracy counting of $(p,q)$-bicliques in bipartite graphs.
    \item Mathematical analysis shows that our algorithm is unbiased and obtains a provable error guarantee.
    \item Extensive experiments show that our algorithm consistently outperforms state-of-the-art approximate algorithms, with up to 8$\times$ reduction in approximation errors and up to $50 \times$ speed-up in running time.
    \item Additional experiments further support the effectiveness and novelty of our algorithm, which includes empirical results such as an ablation study on the newly designed coloring method and an experiment on the hyper-parameters.
\end{itemize}

\noindent\textbf{Outline.} The rest of the paper is organized as follows. We review the related work in \Cref{sec:related}, and introduce notations and definitions in \Cref{sec:preliminaries}. \Cref{sec:algo} presents our sampling-based algorithm. Experimental results are given in \Cref{sec:exp} and \Cref{exp:detail}. We present major experimental results in \Cref{sec:exp} and provide several detailed analyses. We conclude in \Cref{sec:conc}.
\section{Related Works}
\label{sec:related}
\textit{Biclique Counting.} The biclique counting problem focuses on enumerating $(p, q)$-bicliques within bipartite graphs, with particular emphasis on the $(2, 2)$-biclique, commonly known as a butterfly—a fundamental structural pattern with significant real-world applications. 
Given the importance of butterfly motifs, researchers have developed numerous algorithms for their efficient enumeration and counting~\cite{wang2019vertex,sanei2018butterfly, wang2014rectangle}.
The state-of-the-art method utilizes the vertex priority and cache optimization~\cite{wang2019vertex}. 
Recent advances have emerged in multiple directions: parallel computing techniques that optimize both memory and time efficiency~\cite{xia2024gpu}, I/O-efficient methods~\cite{wang2023efficient} that minimize disk operations, and approximation strategies that provide accurate counts while reducing computational overhead. 
The field has further expanded to address more complex graph types, including uncertain graphs with probabilistic edges~\cite{zhou2021butterfly} and temporal graphs incorporating time-varying relationships~\cite{cai2024efficient}.
The $(p,q)$-biclique counting is a general version of butterfly counting, which recently gained much attention~\cite{ye2023efficient,yang2021p}. 
The state-of-the-art methods are introduced and analyzed in the latter (See Section 3.1). 

\textit{$k$-clique Counting.} The evolution of $k$-clique counting algorithms began with Chiba and Nishizeki's backtracking enumeration method, which was later enhanced through more efficient ordering-based optimizations, including degeneracy ordering (Danisch et al.~\cite{eden2020faster}) and color ordering (Li et al.~\cite{li2020ordering}). While these algorithms perform efficiently for small $k$, their performance deteriorates with increasing $k$.
\texttt{Pivoter}~\cite{jain2020power}, introduced by Jain and Seshadhri, marked a significant advancement by employing pivoting techniques from maximal clique enumeration, enabling combinatorial counting instead of explicit enumeration. 
However, \texttt{Pivoter}'s performance can degrade on large, dense graphs. 
To address scalability challenges, researchers have developed sampling-based approaches, including \texttt{TuranShadow}~\cite{jain2020provably} and coloring-based sampling techniques~\cite{ye2022lightning}.
% Nevertheless, these sampling methods still face efficiency challenges when dealing with larger values of $k$.

\textit{Other Motif Counting.} Beyond $k$-clique and $(p,q)$-biclique, numerous works are focusing on counting general motifs, such as four-cycles and other subgraph patterns~\cite{hou2024learnsc,zhao2023learned,bressan2018motif,ma2019linc,seshadhri2019scalable,zhou2024efficient}. These methods can be broadly divided into two categories: traditional counting approaches~\cite{bressan2018motif,ma2019linc,seshadhri2019scalable} and learning-based methods~\cite{hou2024learnsc,zhao2023learned}. 
The former is typically based on the sampling and enumeration techniques used to obtain the approximate and exact count of the motif, respectively. 
In contrast, learning-based approaches offer an alternative approximate solution by leveraging machine learning techniques. Beyond this primary categorization, subgraph counting can be further classified by scope: global counting determines subgraph frequencies across the entire graph, while local counting focuses on specific nodes or edges.

\textit{Dynamic Programming in Subgraph Problems} In our proposed algorithm, dynamic programming is used to compute the number of special types of spanning trees. In many recent works for
subgraph matching algorithms~\cite{kim2021versatile,han2019efficient}, they also use dynamic programming to search a data
graph for embeddings of a path tree obtained from a
query graph. Although we all fix an ordering for these trees and conduct dynamic programming according, our method focuses and works differently: We use both the intermediate and final results to ensure the following sampling process can be effective. Furthermore, our transition formula and recurrence relations are significantly different. The key idea of our method is to take advantage of the bipartite structure. We split the tree into several "prefix" subtrees and identify them by the number of edges and the last edge that adds in. In this way, we manage to compute the number of all trees as well as the number of all "prefix" tree.

\vspace{-1ex}
\section{Preliminaries}
\label{sec:preliminaries}
Throughout the paper, we study the bipartite graphs. A bipartite graph is an undirected graph $G = ((U,V), E)$, where $U$ and $V$ are disjoint sets of vertices and each edge $(u,v) \in E$ satisfies $u \in U, v \in V$. To denote the corresponding vertex and edge sets of a certain $G$, we may use $U(G)$, $V(G)$, and $E(G)$, respectively. For each vertex $u \in U(G)$, we denote its neighbor set as $N(u,G) = \{v \in V \mid (u,v) \in E \}$. Similarly, for a vertex $v \in V(G)$, its neighbor set is $N(v,G) = \{u \in U \mid (u,v) \in E \}$. For any vertex $x \in U \bigcup  V$, its degree $d(x)$ is the size of its neighbor set. Now we define the $(p,q)$-biclique.
\begin{definition}[Biclique]
    Given a bipartite graph $G = ((U,V),E)$, a $(p,q)$-biclique is a subgraph $G' = ((U',V'),E') \in G$ where $|U'|=p$, $|V'|=q$ and $E'=\{(u,v)\mid u\in U', v\in V'\}.$
\end{definition}

\begin{figure}[t!]
    \centering
    \includegraphics[width=0.7\linewidth]{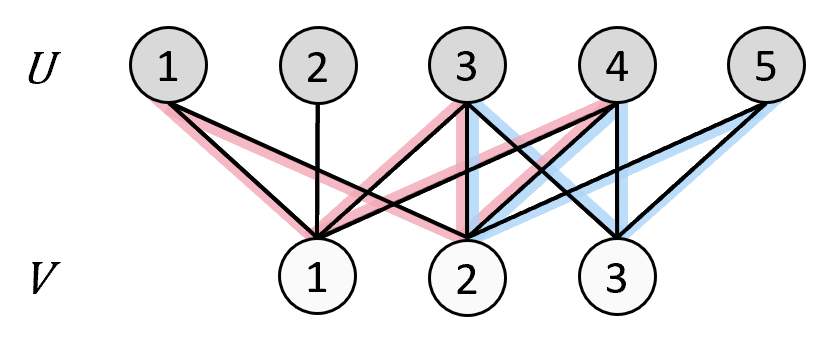}
    \caption{An illustrative example of two $(3,2)$-bicliques.}
    \label{fig:biclique_counting}
\end{figure}

Lastly, we define the problem that we mainly study:
\begin{tcolorbox}[boxsep=1pt,left=2pt,right=2pt,top=3pt,bottom=2pt,width=\linewidth,colback=white,boxrule=0.6pt, colbacktitle=white!,toptitle=2pt,bottomtitle=1pt,opacitybacktitle=0]
\begin{problem}[Biclique Counting]
    \label{prob:biclique-counting}
    Given a bipartite graph $G$ and two integers $p$ and $q$, compute the number of $(p,q)$-bicliques in $G$.
\end{problem}
\end{tcolorbox}
For example, in \Cref{fig:biclique_counting}, there are two $(3,2)$-bicliques: \textbf{(1)} $U'=\{1,3,4\},V'=\{1,2\}$; \textbf{(2)} $U'=\{3,4,5\},V'=\{2,3\}$. Their edges are highlighted with pink lines and blue lines, respectively. Therefore, the number of $(3,2)$-bicliques is $2$. It is also known that \Cref{prob:biclique-counting} is NP-Hard~\cite{yang2021p}, which means there is no algorithm that runs in polynomial time. Throughout this paper, we assume $p,q \geq 2$.

\newcommand{\kw}[1]{{\ensuremath{\texttt{#1}}}}

\section{ALGORITHM}
\label{sec:algo}
In this section, we propose a new algorithm called \textit{\underline{C}olored \underline{B}room-based \underline{S}ampling (\texttt{CBS})} that solves \Cref{prob:biclique-counting} approximately.

\paragraph{1. Coloring.} Inspired by methods for solving $k$-clique counting problems on general graphs \cite{li2020ordering,ye2022lightning}, we start by assigning a color to every vertex on the graph (\Cref{sec:coloring}). Our coloring method is simple, and produces assignments with a small number of colors empirically. Although graph coloring has already been used in k-clique problems on general graphs, it is a new attempt to utilize it on bipartite graphs for biclique problems. On the other hand, if we directly apply the traditional coloring, it becomes 2-coloring and cannot ensure the distinct color assignment for every existing biclique. In general, coloring can be considered as an optimization aiming to reduce the number of accountable brooms. Intuitively, each $(p,q)$-biclique must obtain at least $p$ or $q$ colors due to the definition. In turn, increase the probability such that the sampled broom is a biclique. It also allows us to construct a dynamic programming implemented by following the color indices.

We construct a new coloring condition and prove its effectiveness in result. Since the graph-coloring problem is NP-Complete, there should be no theoretically optimal guarantee. In our empirical study, we have conducted an ablation study (\Cref{exp:color}) showing the effectiveness of the coloring process as well as slightly different coloring strategies. 

\paragraph{2. Broom Pattern Counting.} Instead of counting bicliques directly, our next step is to calculate the number of a special motif called "broom" that we design (\Cref{sec:patterncounting}). It is a special type of spanning trees of $(p,q)$-bicliques. Our first insight is that a spanning tree captures a subgraph better than a simple path. Then, since in the real-world dataset, $(p,q)$-bicliques contain realistic meanings and we also hope the sparsified motif can preserve this inner information. Therefore, we try to allocate all $p+q-1$ edges to each vertex as evenly as possible. In turn, we have the following mathematical definition of $(p,q)$-brooms that indicates this purpose: 
\begin{definition}[Broom]
\label{def:broom}
Given a bipartite graph $G((U,V),E)$, two vertex sets $U' \in U$ and $V' \in V$ of size $p$ and $q$, respectively. A vertex ordering is also given: $\{u_1,u_2,\dots,u_{p}\}$ and $\{v_1,v_2,\dots,v_{q}\}$. The corresponding $(p,q)$-broom is the subgraph $G'((U',V'),E') \in G$ that satisfies 
\begin{align*}
E'=\{( u_i,v_{\lfloor \frac{(i-1)(q-1)}{p-1} + 1\rfloor} )\mid 1\leq i \leq p\} \\ \cup \{(u_{\lceil \frac{(i-1)(p-1)}{q-1}\rceil},v_i) \mid 2\leq i \leq q\}.
\end{align*}
\end{definition}

% \todo{Define the broom under some given order.}

\begin{figure}[t!]
    \centering
    \includegraphics[width=0.8\linewidth]{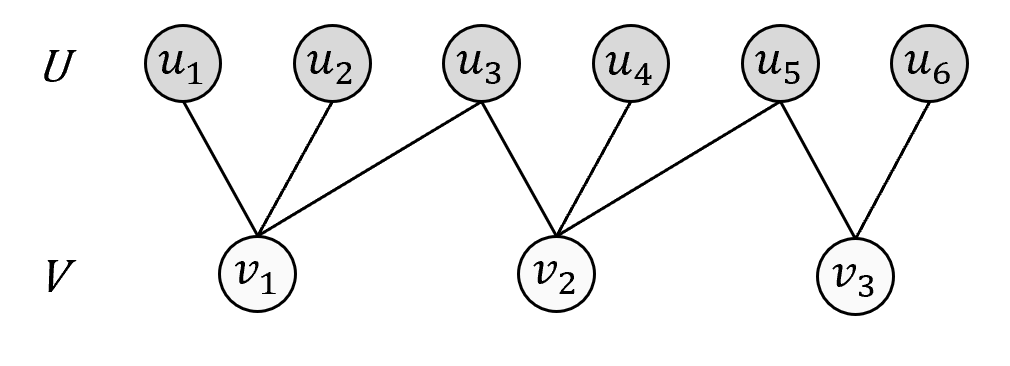}
    \caption{An illustrative example of a $(6,3)$-broom.}
    \label{fig:broom}
\end{figure}

In \Cref{fig:broom}, we provide an illustration of a $(6,3)$-broom. We can observe that such a subgraph is, in fact, a tree, which contains $p+q-1$ edges and is connected. As shown in \Cref{fig:broom}, we can clearly see that each vertex $v_i$ in $V$ connects to the $2-3$ vertices $u_j$ in $U$ that $i$ and $j$ are close. Note that the defined ordering is only used to characterize the pattern, providing a better visualization, and it does not have to align with the original vertex index. Visually speaking, a $(p,q)$-broom is the "skeleton" of a $(p,q)$-biclique. The general idea here is to design a way to sparsify the dense biclique while trying best to preserve its identity. This broom pattern is very different from \texttt{EP/Zz++}~\cite{ye2023efficient}'s $h$-zigzag patterns, as they only use a simple path. We then empirically show that such pattern's count can lead to a much better approximation of biclique counting. 
This distribution design limits the total number of existing brooms, which makes the error analysis tight. Despite its complicated structures, we show in \Cref{sec:patterncounting} that its amount can be computed precisely and efficiently using a dynamic programming process with respect to color assignment.

\paragraph{3. Counting by Sampling Brooms.} Lastly, by sampling, we extract the quantity relationship between the $(p,q)$-brooms and $(p,q)$-bicliques in the graph and eventually compute our approximate answer (\Cref{sec:sampling}). Shown in \Cref{sec:exp}, our algorithm produces high-accuracy approximate counting with faster runtime in real-world datasets. We also prove unbiasedness and provide an error guarantee for our algorithm.

%Coloring是给这张图染色，使得对于任意一个(p,q)-biclique，同侧的点必不同色。

\subsection{Coloring}
\label{sec:coloring}
The color assignment should guarantee that for any $(p,q)$-clique $H$, there are no two vertices in $U(H)$ that share the same color. Similarly, the same is true for any two vertices in $V(H)$. The way to achieve this is to relax the condition as "limiting same-color-endpoint-wedges $\leq q$", which is easy to keep track of.

Our coloring algorithm is shown in \Cref{alg:Coloring}. We run the coloring process \kw{CalcColor} for the vertices in $U$ and $V$ separately (\Cref{Coloring:calc_Col_Left}, \ref{Coloring:calc_Col_Right}). Note that here, we consider the neighbor symmetrically. After preprocessing the input graph $G$ with \Cref{alg:Coloring}, each vertex $x \in U\bigcup V$ is colored, and we denote its color as $c(x)$. 

$S$ denotes the set of vertices that has not been colored. Whenever $S$ is not empty (\Cref{Coloring:notempty}), we will try coloring some vertices with a new color, denoted as $\kappa$. For brevity, we initialize $\kappa$ to be $1$ (\Cref{Coloring:kappa_init}), and whenever we need a new color, we increase $\kappa$ by $1$ (\Cref{Coloring:new_color}). $Cnt$ is a temporary array that is used to guarantee the coloring is legal. Specifically, when we try coloring with color $\kappa$, $Cnt[u]$ stores the number of $v$ that satisfies $v \in N(u,G)$ and there exists $w \in N(v,G)$ such that $w \neq u$ and $c(w) = \kappa$. In each round of coloring, we start by initializing $Cnt$ to be all zeros (\Cref{Coloring:cnt_init}). We use $S_{next}$ to store vertices that have to be colored in the next round, initialized as $\emptyset$ (\Cref{Coloring:snxt_init}). We enumerate vertices in $S$ in a randomized order (\Cref{CalcColor:TryColor}). For each enumerated vertex $u$, we first set its color to be $\kappa$ (\Cref{CalcColor:setcolor}) and update $Cnt$ correspondingly (\Cref{CalcColor:updatecnt}). 
% To update $Cnt$, one of the possible ways is as follows: Assume that we are coloring the vertices with $\kappa$. We set an $ok[v]$ to denote if there is a neighbor $w$ with color $\kappa$ for each vertex $v$. When we update $c(w)\leftarrow \kappa$, we update $w$'s neighbor $v$ ($ok[v] \leftarrow 1$).
If there exists a vertex $x$ such that $Cnt[x]\geq q$ and $c(x) = \kappa$ (\Cref{CalcColor:bad}), then there could exist a $(p,q)$-biclique with two vertices with the same color. Therefore, such coloring is illegal, and we cannot color $u$ with $\kappa$. In such case, we first revert the changes in $Cnt$ (\Cref{CalcColor:revert}) and add $u$ into $S_{next}$ (\Cref{CalcColor:addnext}). In the end, we have tried every vertex in $S$ at least once and have deferred some vertices' coloring to upcoming rounds. We assign $S_{next}$ to $S$ and continue (\Cref{CalcColor:updates}).

To see that the runtime of this coloring process is dominated by the total time of the total number of colors, i,e. $\kappa$, we need to show that querying, updating, and reverting $cnt$ can be done in $O(|N(u,G)|)$ each operation. Here we provide one possible implementation about how to achieve $|N(u,G)|$ for a single update as follows: The idea is to use the lazy propagation trick. We only maintain $cnt$ from vertices that are (will be) colored $\kappa$, and each round we maintain a status $status[v] = (t, w)$ for $V$ during each round coloring $U$. Specifically, $t=0$ indicates $v$ has no neighbor of color $\kappa$; $t=1$ means exactly one neighbor $w$ of color $\kappa$, and $t=2$ means at least two neighbors are colored $\kappa$. Whenever we color $u$ to $\kappa$, we need to check all its neighbor $v \in N(u,G)$. Let $status[v]=(t,w)$. If $t\geq 1$,we update $Cnt[u]$. If $t=1$, we update $Cnt[w]$. After it, we update $status[v]$ accordingly. In all, for each $u$, we update its neighbor $v$ with $O(1)$ operations. Therefore, the total cost for a single update is $O(|N(u,G)|)$. In all, the time complexity is $O(\kappa|E|)$. Note that $\kappa$ is naturally bounded by $|U(H)|$. In fact, shown in \Cref{tab:dataset}, on real-world datasets, $\kappa$ is not large and our coloring algorithm runs very efficiently in turn. The only question left is how such an algorithm guarantees a correct color assignment. We prove it in the following lemma.

\begin{lemma}
After executing \kw{Coloring}$(G,p,q)$, for any $(p,q)$-clique $H$ in $G$, there are no two vertices $x,y$ either both in $U(H)$ or in $V(H)$ that share the same color, i.e. $c(x)=c(y)$.
\end{lemma}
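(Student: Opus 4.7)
The plan is to argue by contradiction, using the invariant maintained by the array $Cnt$ during a single color round. Suppose for contradiction that after \kw{Coloring}$(G,p,q)$ there exists a $(p,q)$-biclique $H$ and two vertices $x,y\in U(H)$ (the case $x,y\in V(H)$ is completely symmetric by the two separate calls at Lines~\ref{Coloring:calc_Col_Left} and \ref{Coloring:calc_Col_Right}) with $c(x)=c(y)=\kappa$. Since each call to \kw{CalcColor} fixes a final color for a vertex only once, both $x$ and $y$ receive color $\kappa$ during the same round of the outer loop; without loss of generality, $y$ is processed after $x$ in the randomized enumeration on Line~\ref{CalcColor:TryColor}.

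Next I would track what the update at Line~\ref{CalcColor:updatecnt} does the moment $y$ is tentatively colored $\kappa$. At that point, $c(x)=\kappa$ already. The key observation is that each of the $q$ vertices $v\in V(H)$ is a neighbor of $y$ in $G$, and each such $v$ has $x$ as a neighbor with $x\neq y$ and $c(x)=\kappa$. By the definition of $Cnt[y]$ --- the number of $v\in N(y,G)$ such that there exists $w\in N(v,G)\setminus\{y\}$ with $c(w)=\kappa$ --- every $v\in V(H)$ contributes $1$ to $Cnt[y]$. Hence immediately after the update, $Cnt[y]\ge q$.

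From here the check on Line~\ref{CalcColor:bad} applied to $x:=y$ fires: $Cnt[y]\ge q$ and $c(y)=\kappa$. The algorithm therefore reverts the change (Line~\ref{CalcColor:revert}) and pushes $y$ into $S_{next}$ (Line~\ref{CalcColor:addnext}), so $y$ does not leave this round with color $\kappa$. This contradicts the assumption that $c(y)=\kappa$ in the final assignment. The same argument, with the roles of $U$ and $V$ exchanged, rules out two $V$-side vertices of $H$ sharing a color, using the call on Line~\ref{Coloring:calc_Col_Right} and the fact that $|U(H)|=p\ge q$ is not actually needed --- the only quantity that matters is that there are at least $q$ common neighbors on the opposite side, which holds by the biclique structure.

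The only real subtlety I expect is making the bookkeeping of $Cnt$ precise when several vertices are processed sequentially in the same round: I need to confirm that partial updates during the round do not accidentally undercount contributions from previously colored vertices, and that a vertex's own ``self-contribution'' is correctly excluded by the $w\neq u$ clause in the definition of $Cnt$. Once this is nailed down, the lower bound $Cnt[y]\ge q$ follows directly by summing the $q$ witnesses $v\in V(H)$, and the contradiction is immediate from the guard on Line~\ref{CalcColor:bad}.
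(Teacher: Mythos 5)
Your proposal is correct and follows essentially the same argument as the paper: assume two $U$-side vertices of a biclique $H$ get the same color $\kappa$, note they must be colored in the same round, and observe that the $q$ common neighbors in $V(H)$ each witness the earlier-colored vertex as a $\kappa$-neighbor, forcing $Cnt$ of the later one to reach $q$ so the guard reverts it --- a contradiction. Your version just spells out the round/ordering bookkeeping and the symmetric $V$-side case in more detail than the paper does.
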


\begin{proof}
Let $U(H)=\{u_1,u_2,\dots,u_{p}\}$, $V(H)=\{v_1,v_2,\dots,v_{q}\}$. Because of symmetry, we only prove the case of $U$ and $p$. We assume that there are two vertices $u_1,u_2 \in U(H)$ with the same color, i.e. $c(u_1)=c(u_2)=k$. WLOG, when we color vertices with the $k$-th color, we first set $c(u_1) \leftarrow k$. Then we will have $Cnt[u_2] \ge q$ since $\{v_1,v_2,\dots,v_{q}\}$ are $u_2$'s neighbors and they all have neighbor $u_1$ such that $c(u_1)=k$. So $u_2$ cannot be colored in this round. This contradicts the assumption.

% 这个不失对称性是真对称性还是假对称性？我们需要说明对于V的同色v_0,v_1，也可以相似地证明吗？毕竟条件已经对称地从 cnt[x]>=q 换成 cnt[x]>=p 了。
%现在这样就好了 我觉得
\end{proof}

% When coloring with the $num$-th color, $Cnt[u]$ is the number of $v$ that satisfies $v \in N(u,G)$ and $\exists w \in N(v,G)$ and $c(w) = num$.

%其实你可以写中文

%你都写完英文了就不用再写了（

\begin{figure}[t!]
    \centering
    \includegraphics[width=0.75\linewidth]{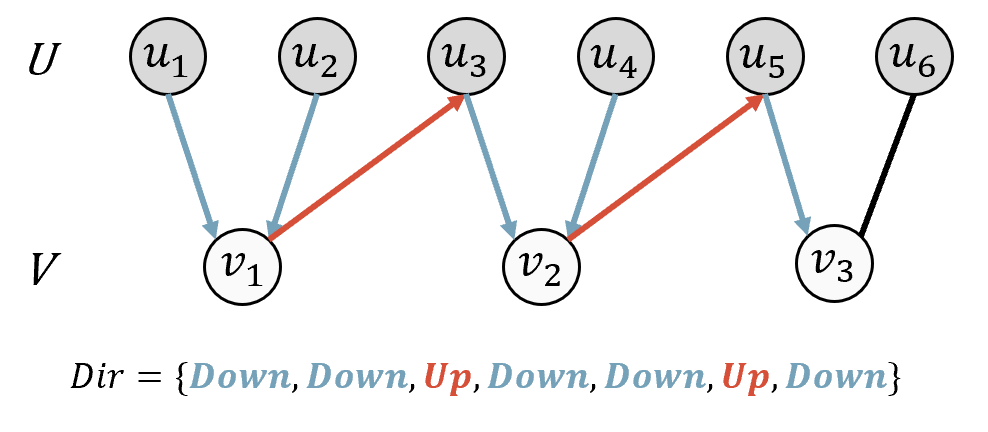}
    \caption{An example of edges' directions in a $(6, 3)$-broom, where edges with different directions are represented by $\texttt{Up}$ and $\texttt{Down}$ in $Dir$ and colored in red and blue, respectively.}
    \label{fig:broom_direction}
\end{figure}
\begin{algorithm}[t]
\small
\caption{\texttt{CBS: Coloring}}\label{alg:Coloring}
\KwIn{$G$: a bipartite graph; $p,q$: two parameters.}
\KwOut{$G'$: a colored bipartite graph.}
\SetKwProg{Fn}{Procedure}{}{}
\SetNoFillComment

\kw{CalcColor}$(G,U(G),p,q)$;\\ \label{Coloring:calc_Col_Left}
\kw{CalcColor}$(G,V(G),q,p)$;\\ \label{Coloring:calc_Col_Right}
$G' \leftarrow G$;\\

\Return $G'$; \label{Coloring:return}

\Fn{\kw{CalcColor}$(G,S,p,q)$}{ 
    Initialize $\kappa \leftarrow 1$; \\ \label{Coloring:kappa_init}
    \While{$|S| > 0$}{  \label{Coloring:notempty}
        Initialize an array $Cnt$ with zeros; \\  \label{Coloring:cnt_init}
        $S_{next} \leftarrow \emptyset$; \\ \label{Coloring:snxt_init}
        \tcc{enumerate vertices in a random order} 
        \For{$u \in S$}{ \label{CalcColor:TryColor} 
            $c(u) \leftarrow \kappa$; \\ \label{CalcColor:setcolor} 
            Update $Cnt$; \\ \label{CalcColor:updatecnt} 
            \If {$\exists x, Cnt[x]\ge q$ and $c(x) = \kappa$}{ \label{CalcColor:bad} 
                Revert $Cnt$'s change; \\ \label{CalcColor:revert} 
                $S_{next} \leftarrow S_{next} \cup \{u\}$; \\ \label{CalcColor:addnext} 
            }
        }
        $\kappa \leftarrow \kappa + 1$; \\  \label{Coloring:new_color}
        $S \leftarrow S_{next}$; \\ \label{CalcColor:updates} 
    }
}
\end{algorithm}

Randomization is a classical way to produce a solution in graph coloring. We have conducted an additional experiment in \Cref{exp:color} using three different ordering strategies: ascending order, descending order, and random order. The results indicate that there are no significant differences in efficiency or error between the three strategies. There is also another ablation study on the effectiveness of adapting the coloring process in counting bicliques. As shown in \Cref{exp:color}, with a slight increase in runtime, the coloring process enhances the accuracy in turn.
\subsection{Pattern Counting}
\label{sec:patterncounting}
After coloring, we enhance the motif we are counting with a condition of distinct colors. However, it is still hard to count bicliques directly. Instead, we count the number of $(p,q)$-brooms. Since we assign a concrete integer for each color in \Cref{alg:Coloring} for any $(p,q)$-biclique, to avoid overcounting, we assume its vertex ordering of $U'$ and $V'$ is following the color from small to large.
% \todo{Emphasize brooms we will count have no nodes sharing the same color. Explicit indicate the ordering (and the target is for counting efficiently)} 
With this exact ordering, by \Cref{def:broom}, there is exactly one $(p,q)$-broom subgraph in each $(p,q)$-biclique. Note that the color of these $(p,q)$-brooms is also unique, which is convenient for counting. \Cref{alg:CountingIndex} is able to compute the number of $(p,q)$-brooms of this type. We then show how to use this value to approximate the biclique count.

The whole process is a dynamic programming approach, where its state is represented by the total number of edges accumulated so far and the last edge being tracked. To well define this dynamic programming state, we need to specify a few more properties in this type of $(p,q)$-broom:

\vspace{3px}
\noindent\textbf{A.} Any $(p,q)$-broom $H$ is connected and contains exactly $p+q-1$ edges. That is to say, it is actually a tree.

\vspace{3px}
\noindent\textbf{B.} If we sort all edges $(u,v) \in E(H)$ by the increasing order of the tuple $(c(u),c(v))$, there is only three types of edges: 
\begin{enumerate}
    \item The edge with the maximum tuple.
    \item Edges sharing an endpoint in $U(H)$ with the next edge.
    \item Edges sharing an endpoint in $V(H)$ with the next edge.
\end{enumerate}
Here, for sorting, the comparison between two tuples $(c(u),c(v))$ and $(c(u'),c(v'))$ is first done by verifying if $c(u)<c(u')).$ If $c(u)=c(u')$, we then check if $c(v)<c(v')$. Therefore, the maximum tuple refers to the last element after sorting in this manner. If For \textbf{(2)} and \textbf{(3)}, we signal them with a \textbf{"direction"} as either \texttt{Up} or \texttt{Down}. 
For example, as shown in the $(6,3)$-broom in \Cref{fig:broom_direction}, $(u_1,v_1)$ is with the smallest partial order. Its direction is \texttt{Down} since it shares the same endpoint $v_1$ with the next edge $(u_2,v_1)$. Similarly, $(u_2,v_1)$, $(u_3,v_2)$, $(u_4,v_2)$, $(u_5,v_3)$ are also with direction \texttt{Down}, highlighted with blue arrows. $(u_3,v_1)$, $(u_5,v_2)$ are with direction \texttt{Up}, highlighted with reversed red arrows. $(u_6,v_3)$ has no direction since it is the edge with the maximum tuple.

After fixing the structure and order of edges by \textbf{B}, we can simply use the number of edges and the last edge to represent the current structure of the growing broom. The rest is enumeration and direct transition. The details are as follows.

To begin with, we sort $U(G)$, $V(G)$ by the increasing order of colors $c(u), c(v)$ (\Cref{index:sortuv}), and sort $E(G)$ by the increasing order of the color tuple $(c(u),c(v))$ (\Cref{index:sorte}). 

Let $Dp[len][pre]$ denote the number of possible ways to build a growing broom with the current number of edges $len$ and the last edge $pre$, which follows the fixed structure and order of edges by \textbf{B}. The objective is to correctly computed all states for $1\leq len \leq p+q-1$ (there are $p+q-1$ edges in total besides the first one) and $pre \in E(G)$. We use a 2D table to store the results, initialized to $0$ (\Cref{CountingIndex:init_Dp}). The base case contains the case where each $(u,v)$ acts as the first edge in the broom: we set $Dp[1][(u,v)]$ to be $1$ (\Cref{index:init1}, \ref{index:init2}). We iterate this dimension from small to large to accumulate the growing broom from $1..t$-th edges to $1..t+1$-th edges (\Cref{index:for}).

We use $Dir[t]$ to denote the $t$-th edge's direction in the $(p,q)$-broom. When $Dir[t]$ is \texttt{Up} (\Cref{index:ifleft}), the $t+1$-th edge and the $t$-th edge share a common point in $U$. Therefore, for each edge $(u,v)$ (\Cref{index:left1}), $Dp[t+1][(u,v)]$ should accumulate the number of ways coming from any $Dp[t][(u,w)]$ such that $w$ is in $N(u,G)$ and $c(w)<c(v)$ (\Cref{index:left2}). Similarly, when $Dir[t]$ is \texttt{Down} (\Cref{index:ifright}), the $t+1$-th edge and the $t$-th edge share a common point in $V$. In this case, $Dp[t+1][(u,v)]$ should accumulate the number of ways coming from any $Dp[t][(w,v)]$ such that $w$ is in $N(v,G)$ and $c(w)<c(u)$ (\Cref{index:right1}, \ref{index:right2}). After all computations, we return $B$ as the sum of all $Dp[p+q-1][\cdot]$ and the $Dp$ table as the results (\Cref{index:return}).

By using the prefix sum technique, for each iteration, we can finish all transition computations in $O(|E|)$. There are $O(p+q)$ rounds. Therefore, the total time complexity is $O((p+q)|E|)$.

\begin{algorithm}[t]
\small
\caption{\texttt{CBS: CountingIndex}}\label{alg:CountingIndex}
\KwIn{$G$: a colored bipartite graph; $p,q$: two parameters.}
\KwOut{$Dp$: a $2D$ array; $B$: the total number of $(p,q)$-brooms in $G$.}

Sort $U(G),V(G)$ by the increasing order of $c(u),c(v)$;\\ \label{index:sortuv}
Sort $E(G)$ by the increasing order of $(c(u),c(v))$;\\ \label{index:sorte}
% Initialize an empty array $Dir$ ; \\ \label{CountingIndex:init_Dir}
% \uIf {$p=1$}{
%     \lFor {$i\leftarrow 1$ \KwTo $q-1$}{
%         $Dir$.append($LEFT$)
%     }
% }
% \uElseIf{$q=1$}{
%     \lFor {$i\leftarrow 1$ \KwTo $p-1$}{
%         $Dir$.append($RIGHT$)
%     }
% }
% \Else{
%     \For {$i\leftarrow 1$ \KwTo $(p-1)(q-1)$}{
%         \lIf {$i \bmod (p-1)=0$}{
%             $Dir$.append($LEFT$)
%         }
%         \lIf {$i \bmod (q-1)=0$}{
%             $Dir$.append($RIGHT$)
%         }
%     }
% }

Initialize $Dp$ with zeros; \\ \label{CountingIndex:init_Dp}

\lFor {$(u,v) \in E(G)$}{ \label{index:init1}
    $Dp[1][(u,v)] \leftarrow 1$ \label{index:init2}
}
\For {$t\leftarrow 1$ \KwTo $p+q-2$}{ \label{index:for}
    \uIf {$Dir[t] = \texttt{Up}$}{ \label{index:ifleft}
        \For {$(u,v) \in E(G)$}{ \label{index:left1}
            $Dp[t+1][(u,v)] \leftarrow \sum_{w\in N(u,G),c(w)<c(v)} Dp[t][(u,w)]$; \\ \label{index:left2}
        } 
    } 
    \Else{ \label{index:ifright}
        \For {$(u,v) \in E(G)$}{ \label{index:right1}
            $Dp[t+1][(u,v)] \leftarrow \sum_{w\in N(v,G),c(w)<c(u)} Dp[t][(w,v)]$; \\ \label{index:right2}
        }
    }
}
% $B \leftarrow 0$; \\
% \lFor {$(u,v) \in E(G)$}{
%     $B \leftarrow B + Dp[len-1][(u,v)]$
% }
$B \leftarrow \sum_{(u,v) \in E(G)} Dp[p+q-1][(u,v)]$; \\ \label{index:return}

\Return $Dp,B$; \label{CountingIndex:return}
\end{algorithm}

\subsection{Approximate Counting via Sampling}
\label{sec:sampling}
After \kw{CountingIndex}$(G,p,q)$, we have accumulated $B$ $(p,q)$-brooms. To approximate the quantity relation between our computed $(p,q)$-brooms and $(p,q)$-bicliques, we can do the following sampling:
\begin{enumerate}[leftmargin=*]
    \item Set two counters $cnt_{broom}$ and $cnt_{biclique}$.
    \item Uniformly sample a $(p,q)$-broom $P$ from all $B$ ones and increase $cnt_{broom}$ by $1$.
    \item If the induced subgraph by $U(P)$ and $V(P)$ is a $(p,q)$-biclique, increase $cnt_{biclique}$ by $1$.
\end{enumerate}

After applying this sampling process sufficiently many times, we can roughly approximate the number of $(p,q)$-bicliques by $cnt_{biclique}/cnt_{broom}\times B.$
However, such a sampling process is too inefficient. We now propose \Cref{alg:Sampling} to accelerate this process, utilizing the computed dynamic programming results from \kw{CountingIndex}$(G,p,q)$. Recall that $Dp[t][(u,v)]$ represents the number of ways to build a growing $(p,q)$-broom til the $t$-th edge, which is $(u,v)$. 
To begin with, we start by sampling the last edge of the $(p,q)$-broom, denoted as $(u_{last},v_{last})$ (\Cref{sampling:samplelast}). Here, the sampling should follow the weight distribution of $\{Dp[p+q-1][\cdot]\}$, indicating how many $(p,q)$-brooms end with each edge. The following process is similar to reverting the dynamic programming process. Instead of building the $(p,q)$-broom from the $1$-st edge to the $(p+q-1)$-th edge, we do it reversely. We use $(u_{cur},v_{cur})$, $U'$, $V'$ to denote the current growing $(p,q)$-broom. They are initialized as $(u_{last},v_{last})$, $\{u_{last}\}$, $\{v_{last}\}$ respectively (\Cref{sampling:initcur}, \ref{sampling:inituv}). Specifically, since $(u_{last},v_{last})$ is the $(p+q-1)$-th edge, we are now adding the $(p+q-2)$-th edge til the $1$-th edge gradually (\Cref{sampling:for}). 

Assume we are processing the $i$-th edge now, and we know the $(i+1)$-th edge is $(u_{cur},v_{cur})$. We initialize an edge set $S$ to store the candidate edge for the $i$-th edge. Based on the direction signaling, we can know whether it should share the common endpoint in $U$ or $V$. If $Dir[i]$ is \texttt{Up} (\Cref{sampling:ifleft}), then we should find all $w \in N(u_{cur},G)$ such that $c(w)<c(v_{cur})$ and $U' \subseteq N(w,G)$. We assign $S$ to be $\{(u_{cur},w)\}$ (\Cref{sampling:leftupdate}). The first condition is for the color order, and the second condition indicates that adding $w$ can guarantee that the induced subgraph is biclique. Similarly, if $Dir[i]$ is \texttt{Down} (\Cref{sampling:ifright}), we find all all $w \in N(v_{cur},G)$ such that $c(w)<c(u_{cur})$ and $V' \subseteq N(w,G)$. Then we assign $S$ to be $\{(w,v_{cur})\}$ (\Cref{sampling:rightupdate}).  

The idea to accelerate the sampling is, by building the $(p,q)$-broom, we strictly guarantee that it will correspond to a $(p,q)$-biclique while keeping track of the probability of sampling out this $(p,q)$-broom, which can be computed through the dynamic programming table. In the beginning, we initialize $ans$ to be $1$ (\Cref{sampling:initans}). Then, for the $i$-th edge, the probability contribution of sampling it out from $S$ should be the number of the growing broom ends with $(u,v)\in S$ divided by the total number of possible brooms at this step, which is 
$\frac{\sum_{(u,v) \in S} Dp[i][(u,v)]}{Dp[i+1][(u_{cur},v_{cur})]}.$
We multiply $ans$ by this value (\Cref{sampling:updateans}). If $ans$ becomes $0$, then there is no possible $(p,q)$-biclique from the current sampled $(p,q)$-broom. We return with $0$ in this case (\Cref{sampling:break}). Now we are ready to sample the $i$-th edge from $S$ (\Cref{sampling:samplenext}). We do so following the normalized distribution of possible brooms from this step ($\{Dp[i][\cdot]\}$). To continue the process, we assign $(u_{cur},v_{cur})$, $U'$, $V'$ to $(u_{next},v_{next})$, $U' \cup \{ u_{next}\}$, $V' \cup \{ v_{next}\}$, respectively (\Cref{sampling:updatecur}, \ref{sampling:updateuv}).

In the end, we successfully sample a $(p,q)$-broom that corresponds to one $(p,q)$-biclique and compute the probability of sampling it out. We return this probability by multiplying $B$ as the approximate count of $(p,q)$-bicliques (\Cref{Sampling:return}).
It is not hard to see that the time complexity is dominated by the cost of updating $S$, which requires enumerating all neighbors. The loop only lasts $O(p+q)$. Therefore, the time complexity is $O((p+q)\times \Delta)$, where $\Delta$ denotes the maximal degree in $G$.

% 证明：对于任意 U，V（因而可以推出 u_cur, v_cur），继续 sample 以后 （ans 乘上系数）的期望等于以 U，V 开头的 biclique 数量除以 DP[i+1][u_cur, v_cur]
% 类似的证明第一步 sample
% \end{proof}

% 备注: 按权重{w_i}随机即为以 w_i/ sigma w_i 的概率选 i
% Sampling 过程总体是按(p,q)-broom倒着的顺序选出一个(p,q)-clique
% Sampling只抽一次样,抽多次样是外面主函数的事情
% ans/tot 其实就是路径上的决策的概率分数乘起来,按这个概率乘上broom数(也就是tot)来估计clique总数
% 1. 先按权重选出(p,q)-broom的最后一条边(u_last,v_last)
% 2. U'和V'为已经确定选好的(p,q)-clique的点
% 3. u_cur和v_cur是确定的点中编号最小的点(也是对应的broom编号为i+1的边的两个端点)
% 4. 得到下一条边(其实是上一条边)的集合NextEdgeSet,如果是LEFT的话应该是从(u_cur,v_cur)转向(u_cur,w),而且w需要满足c(w)<c(v_cur),并且 w 到左侧所有已选点(也就是U')都有连边,RIGHT的情况是对称的
% 5. ans:=ans*(能作为biclique下一条边的broom数之和)/(这一阶段总的broom数)
% 6.下一条边按照broom数权重抽样
% 7. 更新U',V',u_cur,v_cur
% 时间复杂度是$O( p*max(d(u)) + q*max(d(v)) )$ (need to check)
\begin{algorithm}[t]
\caption{\texttt{CBS: Sampling}}\label{alg:Sampling}
% {$G,p,q,Col,Dpc,tot$}\\ 
\KwIn{$G$: a colored bipartite graph; $p,q$: two parameters; $Dp$: a $2D$ array; $B$: the total number of $(p,q)$-brooms in $G$.}
\KwOut{$ans$: estimate number of $(p,q)$-clique in $G$.}
\small
% Calculate array $Dir$ following the steps in $CountingIndex()$; \\
Sample the last edge $(u_{last},v_{last})$ in $E(G)$ following the weight distribution of brooms;\\ \label{sampling:samplelast}
$(u_{cur},v_{cur}) \leftarrow (u_{last},v_{last})$; \\ \label{sampling:initcur}

$U' \leftarrow \{ u_{last}\},\ V' \leftarrow \{ v_{last}\}$; \\ \label{sampling:inituv}

$ans \leftarrow 1$;\\ \label{sampling:initans}
\For{$i\leftarrow p+q-2$ \KwTo $1$}{ \label{sampling:for}
    Initialize set $S \leftarrow \emptyset$; \\ \label{sampling:initnextedge}
    \uIf{$Dir[i]=\texttt{Up}$}{ \label{sampling:ifleft}
        $S \leftarrow \{ (u_{cur},w) \ | \ w \in N(u_{cur},G),c(w) <c(v_{cur}),U'\subseteq N(w,G)\}$; \\ \label{sampling:leftupdate}
    }
    \Else{ \label{sampling:ifright}
        $S \leftarrow \{ (w,v_{cur}) \ | \ w \in N(v_{cur},G),c(w)<c(u_{cur}),V'\subseteq N(w,G)\}$; \\ \label{sampling:rightupdate}
    }
    $ans \leftarrow ans \times \frac{\sum_{(u,v) \in S} Dp[i][(u,v)]}{Dp[i+1][(u_{cur},v_{cur})]}$;\\ \label{sampling:updateans}
    \lIf{$ans = 0$}{  
        \textbf{break}  \label{sampling:break} 
    } 
    Sample the next edge $(u_{next},v_{next})$ in $S$ following the weight distribution of growing brooms;\\ \label{sampling:samplenext}
    $(u_{cur},v_{cur}) \leftarrow (u_{next},v_{next})$; \\ \label{sampling:updatecur}
    $U' \leftarrow U' \cup \{ u_{next}\}, V' \leftarrow V' \cup \{ v_{next}\}$; \\ \label{sampling:updateuv}
}
\Return $ans \times B$; \label{Sampling:return}
\end{algorithm}

\subsection{Overall Algorithm} 
We provide \Cref{alg:CBS} as a black box to call all three subroutines properly and output the estimated answer of the biclique count. We start by coloring (\Cref{CBS:init_Col}) and pattern counting (\Cref{CBS:init_Dp}). Then by the input sampling size parameter $T$, we repeatedly call \kw{Sampling}$(G,p,q,Dp,B)$ and aggregate the return to $\hat{C}$. The output approximate $(p,q)$-biclique count is the average of all $T$ attempts (\Cref{CBS:return}). Note that as a common trick, we will execute core-reduction for the original graph~\cite{yang2021p}: For any query $(p,q)$, we split the graph and reduce the query pair to $(p-1,q)$. Now, we give a theoretical analysis of \texttt{CBS} in the following.

\begin{algorithm}[t]
\caption{\texttt{CBS: Main}}\label{alg:CBS}
\KwIn{$G$: a bipartite graph; $p,q$: two parameters; $T$: sampling times.}
\KwOut{$\hat{C}$: estimate number of $(p,q)$-clique.}
\small
$G \leftarrow \ $ \kw{Coloring}$(G,p,q)$;\\ \label{CBS:init_Col}
$Dp,B \leftarrow \ $ \kw{CountingIndex}$(G,p,q)$; \\ \label{CBS:init_Dp}
$\hat{C} \leftarrow 0$; \\
\For{$i\leftarrow 1$ \KwTo $T$}{
    $\hat{C} \leftarrow \hat{C} \ + $ \kw{Sampling}$(G,p,q,Dp,B)$;
}
$\hat{C} \leftarrow \hat{C}/T$; \\

\Return $\hat{C}$; \label{CBS:return}
\end{algorithm}

\subsubsection{Unbiasedness.} We now show that the \texttt{CBS} method is unbiased. Let $(u_{cur}, v_{cur})$, $U'$, and $V'$ represent the current growing $(p, q)$-broom. For clarity, we introduce the following definitions:
\begin{itemize}[leftmargin=*] 
\item $\mathcal{F}_{U',V'}$: The total value to be multiplied into $ans$, defined as the product of the last $i = (p+q)-(|U'|+|V'|)$ fractions.
\item $\mathcal{B}_{U',V'}$: The number of $(p, q)$-brooms $H$ such that $U', V'$ are the maximal tuples of $U(H)$ and $V(H)$.
\item $\mathcal{C}_{U', V'}$: The number of $(p, q)$-bicliques $H'$ such that $U', V'$ are the maximal tuples of $U(H')$ and $V(H')$.
\end{itemize}
% \begin{proof}
    
     After that, we prove the following lemma by mathematical induction.
    \begin{lemma} 
    \label{lemma:unbiased-f}
    For any growing $(p,q)$-broom $U',V'$,
    $
    \mathbb{E}\left[\mathcal{F}_{U',V'}\right] = \mathcal{C}_{U',V'} / \mathcal{B}_{U',V'}.
    $
    \end{lemma}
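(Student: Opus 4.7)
The plan is to proceed by induction on $i := (p+q) - (|U'|+|V'|)$, the number of edges still to be appended after reaching state $(U',V')$. In the base case $i=0$, we have $|U'|=p$ and $|V'|=q$, so $\mathcal{F}_{U',V'}$ is the empty product, equal to $1$. Because the biclique conditions $U' \subseteq N(w,G)$ or $V' \subseteq N(w,G)$ are enforced at every iteration of \Cref{alg:Sampling}, whenever $(U',V')$ is reached with $ans \neq 0$ the set $U'\cup V'$ already induces a $(p,q)$-biclique; in that event $\mathcal{B}_{U',V'} = \mathcal{C}_{U',V'} = 1$, matching $\mathbb{E}[\mathcal{F}_{U',V'}] = 1$. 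Otherwise some future factor $\sum_{e\in S_i}Dp[i][e]$ is forced to vanish, making $\mathcal{F}\equiv 0$, which agrees with $\mathcal{C}_{U',V'}=0$.

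For the inductive step ($i\geq 1$), write $e_{i+1}:=(u_{cur},v_{cur})$ and let $(U'',V'')$ denote the state obtained after sampling the next edge $e_i\in S_i$. One can decompose
\[
\mathcal{F}_{U',V'} \;=\; \frac{\sum_{e\in S_i} Dp[i][e]}{Dp[i+1][e_{i+1}]} \cdot \mathcal{F}_{U'',V''}.
\]
Since $e_i$ is sampled from $S_i$ with probability $Dp[i][e_i]\big/\sum_{e\in S_i}Dp[i][e]$, taking expectation and applying the inductive hypothesis to $\mathcal{F}_{U'',V''}$ yields
\[
\mathbb{E}[\mathcal{F}_{U',V'}] \;=\; \frac{1}{Dp[i+1][e_{i+1}]}\sum_{e_i\in S_i} Dp[i][e_i]\cdot \frac{\mathcal{C}_{U'',V''}}{\mathcal{B}_{U'',V''}}.
\]

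The remainder will reduce to two combinatorial identities. First, $\mathcal{B}_{U',V'}=Dp[i+1][e_{i+1}]$: given the top $(U',V')$, the broom structure of \Cref{def:broom} forces the edges among $U'\cup V'$ to form a specific connected subtree whose lowest-tuple edge is $e_{i+1}$, so a broom with this top is in bijection with a length-$(i+1)$ prefix ending at $e_{i+1}$, which is precisely what $Dp[i+1][e_{i+1}]$ enumerates via the forward recurrence of \Cref{alg:CountingIndex}. Applied to $(U'',V'')$, this gives $\mathcal{B}_{U'',V''}=Dp[i][e_i]$, cancelling the $Dp[i][e_i]$ factor. Second, $\mathcal{C}_{U',V'}=\sum_{e_i\in S_i}\mathcal{C}_{U'',V''}$: every biclique with top $(U',V')$ is uniquely identified by the vertex $w$ sitting immediately below $U'$ (if $Dir[i]=\texttt{Down}$) or below $V'$ (if $Dir[i]=\texttt{Up}$) in color order, and the conditions defining $S_i$---the color constraint and the opposite-side neighborhood containment---are exactly what such an extending $w$ must satisfy. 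Substituting both identities collapses the expression to $\mathcal{C}_{U',V'}/\mathcal{B}_{U',V'}$, closing the induction. The main obstacle I foresee is rigorously justifying the first identity, since $Dp$ was defined by a forward recurrence counting prefixes whereas $\mathcal{B}_{U',V'}$ is defined by a prescribed top; the bridge requires showing that in any broom the vertices $U'\cup V'$ induce a connected subtree whose lowest-tuple edge is determined by $(U',V')$, so that any valid prefix extends unambiguously to a full broom with the prescribed top.
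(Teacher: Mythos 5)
Your proposal is correct and follows essentially the same route as the paper's proof: induction on the number of remaining edges, the same one-step decomposition of $\mathcal{F}_{U',V'}$ into the fraction $P$ times $\mathcal{F}_{U'',V''}$, and the same two identities $\mathcal{B}_{U'',V''}=Dp[i][e_i]$ and $\mathcal{C}_{U',V'}=\sum_{e_i\in S}\mathcal{C}_{U'',V''}$. You spell out the justification of $\mathcal{B}_{U',V'}=Dp[i+1][e_{i+1}]$ more carefully than the paper, which simply asserts it ``according to the definition of $Dp$ and $\mathcal{B}$,'' but the argument is the same.
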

    
    \begin{proof}
        \textbf{(a)} When $|U'| + |V'|=p + q$, we have $\mathcal{B}_{U',V'}=1$. In this case, $\mathcal{F}_{U',V'}=\mathcal{C}_{U',V'}=1$ if $U',V'$ forms a $(p,q)$-biclique; otherwise $\mathcal{F}_{U',V'}=\mathcal{C}_{U',V'}=0$. Hence, 
           $\mathbb{E}\left[\mathcal{F}_{U',V'}\right] =\mathcal{C}_{U',V'} / \mathcal{B}_{U',V'}$ trivially holds in this case.

        \textbf{(b)} Suppose that the lemma holds for all growing $(p, q)$-brooms with $\ |U'| + |V'| = k + 1$ . For any broom with $|U'|+|V'|=k$, let $P = \frac{\sum_{(u,v) \in S} Dp[i][(u,v)]}{Dp[i+1][(u_{cur},v_{cur})]}.$ We then have
        \begin{align*}
            \mathbb{E}\left[\mathcal{F}_{U',V'}\right] = P \sum\limits_{(u,v) \in S} \frac{Dp[i][(u,v)]}{\sum\limits_{(u,v) \in S} Dp[i][(u,v)]} \mathbb{E}\left[\mathcal{F}_{U' \cup \{ u\},V' \cup \{ v\}}\right].
        \end{align*}
        Since $u_{cur}=u$ or $v_{cur}=v$ always holds, we have that $|U' \cup \{ u\}|$$+$$|V' \cup \{v\}|$$=|U|+|V|+1=k+1.$
        According to the definition of $Dp$ and $\mathcal{B}$, We can derive that $Dp[i][(u,v)]=\mathcal{B}_{U' \cup \{ u\},V' \cup \{ v\}}$, and thus
        \begin{align*}
            \mathbb{E}\left[\mathcal{F}_{U',V'}\right] &= \frac{\sum\limits_{(u,v) \in S} Dp[i][(u,v)]\frac{\mathcal{C}_{U' \cup \{ u\},V' \cup \{ v\}}}{\mathcal{B}_{U' \cup \{ u\},V' \cup \{ v\}}}}{Dp[i+1][(u_{cur},v_{cur})]} \\
            &= \frac{\sum\limits_{(u,v) \in S} \mathcal{C}_{U' \cup \{ u\},V' \cup \{ v\}}}{\mathcal{B}_{U',V'}} = \frac{\mathcal{C}_{U',V'}}{\mathcal{B}_{U',V'}},
        \end{align*}
    where the last equation is because $\mathcal{C}_{U' \cup \{ u\},V' \cup \{ v\}}=0$ for all pairs of $(u,v) \notin S$.
    \end{proof}
    \noindent 
    Then we can derive the following theorem:
    \begin{theorem}
    \label{theorem:unbiased-C_hat}
        Let $ans_i$ be the value of $ans$ in the $i$-th sampling. Let $\hat{C}=\frac{1}{T}\sum_{i=1}^{T} ans_i \times B$. Then $\hat{C}$ is an unbiased estimator of the number of $(p,q)$-bicliques in $G$.
    \end{theorem}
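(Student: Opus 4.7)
The plan is to reduce the theorem to Lemma \ref{lemma:unbiased-f} by conditioning on the first randomized choice in \texttt{Sampling}, namely the sampled last edge $(u_{last},v_{last})$. Since $\hat{C}=\frac{1}{T}\sum_{i=1}^T ans_i \times B$ and the $ans_i$ are i.i.d., linearity of expectation immediately gives $\mathbb{E}[\hat{C}]=\mathbb{E}[ans]\times B$. Hence it suffices to prove that $\mathbb{E}[ans]\times B$ equals the total number $C$ of $(p,q)$-bicliques in $G$.

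First, I will identify the two key book-keeping quantities. By construction of \texttt{CountingIndex}, $Dp[p+q-1][(u,v)]$ counts exactly the number of $(p,q)$-brooms whose last edge (in the color-tuple order) is $(u,v)$, so $Dp[p+q-1][(u,v)]=\mathcal{B}_{\{u\},\{v\}}$, and $B=\sum_{(u,v)\in E(G)} \mathcal{B}_{\{u\},\{v\}}$. Similarly, since the color assignment forces a unique vertex ordering of $U(H')$ and $V(H')$ for every biclique $H'$, every $(p,q)$-biclique has a unique pair consisting of its color-maximal $U$-vertex and color-maximal $V$-vertex, so $C=\sum_{(u,v)\in E(G)}\mathcal{C}_{\{u\},\{v\}}$.

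Next, I will condition on the first sampling step. The last edge $(u_{last},v_{last})$ is drawn with probability $Dp[p+q-1][(u_{last},v_{last})]/B$, after which $ans$ becomes precisely the random variable $\mathcal{F}_{\{u_{last}\},\{v_{last}\}}$ defined just before Lemma \ref{lemma:unbiased-f}. Applying the lemma with $|U'|+|V'|=2$ yields
\begin{align*}
\mathbb{E}[ans] &= \sum_{(u,v)\in E(G)} \frac{Dp[p+q-1][(u,v)]}{B}\cdot\mathbb{E}\!\left[\mathcal{F}_{\{u\},\{v\}}\right] \\
&= \sum_{(u,v)\in E(G)} \frac{\mathcal{B}_{\{u\},\{v\}}}{B}\cdot\frac{\mathcal{C}_{\{u\},\{v\}}}{\mathcal{B}_{\{u\},\{v\}}} = \frac{C}{B}.
\end{align*}
Multiplying by $B$ and invoking linearity over the $T$ independent samples gives $\mathbb{E}[\hat{C}]=C$, completing the argument.

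The only nontrivial step is the identification $Dp[p+q-1][(u,v)]=\mathcal{B}_{\{u\},\{v\}}$ and the fact that summing $\mathcal{C}_{\{u\},\{v\}}$ over all edges counts each biclique exactly once. Both follow from property \textbf{B} in \Cref{sec:patterncounting} together with the distinct-color guarantee established in \Cref{sec:coloring}, so no new machinery is needed beyond Lemma \ref{lemma:unbiased-f}; the rest is direct substitution.
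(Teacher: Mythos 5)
Your proposal is correct and follows essentially the same route as the paper's proof: condition on the sampled last edge (drawn proportionally to $Dp[p+q-1][\cdot]$), invoke Lemma~\ref{lemma:unbiased-f} at the base case $|U'|+|V'|=2$ together with the identification $Dp[p+q-1][(u,v)]=\mathcal{B}_{\{u\},\{v\}}$, and sum over edges before applying linearity over the $T$ trials. The only cosmetic difference is that you explicitly justify why $\sum_{(u,v)}\mathcal{C}_{\{u\},\{v\}}$ counts each biclique exactly once via the color-maximal pair, which the paper compresses into the notation $\mathcal{C}_{\emptyset,\emptyset}$.
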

    \begin{proof}
        We first have the following induction by \Cref{lemma:unbiased-f}:
        \begin{align*}
             &\mathbb{E}\left[ans_i \times B\right] \\ &= B \times \sum\limits_{(u,v) \in E(G)} \frac{Dp[p+q-1][(u,v)]}{\sum Dp[p+q-1][(u,v)]} \mathbb{E}\left[\mathcal{F}_{\{u\},\{v\}}\right] \\
             &= \sum\limits_{(u,v) \in E(G)} Dp[p+q-1][(u,v)] \times \mathbb{E}\left[\mathcal{F}_{\{u\},\{v\}}\right] \\
             &= \sum\limits_{(u,v) \in E(G)} \mathcal{C}_{\{u\},\{v\}} = \mathcal{C}_{\emptyset,\emptyset}.  
        \end{align*}
        Then, we can derive that:
       $\mathbb{E}\left[\hat{C}\right]=\frac{1}{T}\sum_{i=1}^{T}\mathbb{E}\left[ans_i \times B\right]=\mathcal{C}_{\emptyset,\emptyset}.$
    
        Note that $\mathcal{C}_{\emptyset,\emptyset}$ is the number of $(p,q)$-bicliques in $G$ according to the definition. Therefore,  $\hat{C}$ is an unbiased estimator of the number of $(p,q)$-bicliques in $G$.
    \end{proof}
    Based on \Cref{theorem:unbiased-C_hat}, we have obtained an unbiased estimator $\hat{C}$ of the number of $(p,q)$-bicliques in \Cref{alg:CBS}.

\subsubsection{Error Analysis.}
we now analyze the estimation error of our sampling algorithms. Our analysis relies on the classic Hoeffding’s inequalities, which are shown below.

\begin{lemma}[Hoeffding’s inequality, \cite{chernoff1952measure,hoeffding1994probability}]
    \label{lemma:Hoeffding}
    For the random variables $X_i \in [0,M]$,$1\leq i \leq n$, we let $X = \sum_{i=1}^n X_i$. Then for $\epsilon>0$, we have
    \begin{align*}
        \Pr(X \ge (1+\epsilon)\mathbb{E}\left[X\right]) &\leq \exp(-\frac{2 \epsilon^2 \mathbb{E}\left[X\right]^2}{nZ^2}), \\
        \Pr(X \le (1-\epsilon)\mathbb{E}\left[X\right]) &\leq \exp(-\frac{2 \epsilon^2 \mathbb{E}\left[X\right]^2}{nZ^2}).
    \end{align*}
\end{lemma}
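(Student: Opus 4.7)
The plan is to prove the inequality by the classical Chernoff moment-generating-function technique, presuming the $X_i$ are independent (the standard hypothesis for Hoeffding's inequality; also, the $Z$ in the stated bound is clearly a typo for the uniform upper bound $M$). For any $t > 0$, I would apply Markov's inequality to $e^{tX}$:
\[
\Pr\bigl(X \ge (1+\epsilon)\mathbb{E}[X]\bigr) = \Pr\bigl(e^{tX} \ge e^{t(1+\epsilon)\mathbb{E}[X]}\bigr) \le e^{-t(1+\epsilon)\mathbb{E}[X]}\, \mathbb{E}[e^{tX}],
\]
and then factor the moment generating function by independence as $\mathbb{E}[e^{tX}] = \prod_{i=1}^{n} \mathbb{E}[e^{tX_i}]$.

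Next I would invoke Hoeffding's lemma: for any random variable $Y$ supported in $[a,b]$ with mean $\mu$, one has $\mathbb{E}[e^{t(Y-\mu)}] \le \exp(t^2(b-a)^2/8)$. Applied to each $X_i \in [0,M]$ with mean $\mu_i$, this yields $\mathbb{E}[e^{tX_i}] \le \exp(t\mu_i + t^2 M^2/8)$. Multiplying across $i$ and substituting back gives
\[
\Pr\bigl(X \ge (1+\epsilon)\mathbb{E}[X]\bigr) \le \exp\bigl(-t\epsilon\, \mathbb{E}[X] + n t^2 M^2/8\bigr).
\]
This is a quadratic in $t$ whose minimum over $t>0$ occurs at $t^\star = 4\epsilon\, \mathbb{E}[X]/(nM^2)$; substituting $t^\star$ recovers exactly the claimed bound $\exp\bigl(-2\epsilon^2 \mathbb{E}[X]^2/(nM^2)\bigr)$. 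The lower-tail estimate is symmetric: apply the same argument to the variables $-X_i \in [-M,0]$, which have the same sub-Gaussian range parameter.

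The only nontrivial ingredient, and hence the main obstacle, is Hoeffding's lemma itself. I would prove it by a convexity argument. Writing $y = \lambda a + (1-\lambda) b$ with $\lambda = (b-y)/(b-a)$, convexity of $x \mapsto e^{tx}$ gives $e^{ty} \le \lambda e^{ta} + (1-\lambda)e^{tb}$; taking expectations and centering at $\mu$ reduces the task to bounding the log-MGF $\varphi(t) := \log \mathbb{E}[e^{t(Y-\mu)}]$. One checks $\varphi(0) = \varphi'(0) = 0$, and then identifies $\varphi''(t)$ as the variance of $Y$ under an exponentially tilted probability measure still supported in $[a,b]$; Popoviciu's variance inequality bounds this by $(b-a)^2/4$. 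Taylor's theorem then yields $\varphi(t) \le t^2 (b-a)^2 / 8$, after which the remaining steps collapse to routine algebra on exponentials.
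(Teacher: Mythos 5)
The paper does not prove this lemma at all: it is stated as a classical result and attributed to \cite{chernoff1952measure,hoeffding1994probability}, so there is no in-paper argument to compare against. Your derivation is the standard and correct one --- Markov's inequality on the moment generating function, factorization by independence, Hoeffding's lemma for each $X_i \in [0,M]$, and optimization over $t$, with the symmetric argument for the lower tail --- and you are right that the $Z$ in the displayed bound is a typo for the range bound $M$ (the paper later instantiates it with $M = B$ in Theorem~\ref{theorem:estimation-error}). The one hypothesis worth stating explicitly, which the paper also omits, is independence of the $X_i$; this does hold in the application since the $T$ sampling runs are independent.
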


Based on \Cref{lemma:Hoeffding}, we can derive the estimation error of our algorithms as shown in the following theorem.

\begin{theorem}
    \label{theorem:estimation-error}
    Let $C,B$ be the number of $(p,q)$-bicliques, $(p,q)$-brooms, respectively. Let $\hat{C} = \frac{1}{T} \sum_{i=1}^{T} ans_i \times B$ denote the estimated number of $(p,q)$-bicliques, where $ans_i$ is the return result in the $i$-th sampling. Then, $\hat{C}$ is a $(1 + \epsilon)$ approximation of $C$ with probability $(1-\alpha)$ if $T \ge \frac{B^2}{2\epsilon^2 C^2}\ln(\frac{2}{\alpha})$. 
\end{theorem}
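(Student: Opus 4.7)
The idea is a direct application of Hoeffding's inequality (\Cref{lemma:Hoeffding}) to the i.i.d.\ random variables $Y_i \defeq ans_i \times B$, combined with the unbiasedness result of \Cref{theorem:unbiased-C_hat}. So $\hat{C} = \frac{1}{T}\sum_{i=1}^T Y_i$ and the samples are mutually independent because each call to \kw{Sampling} uses fresh randomness on a fixed colored graph and fixed $Dp$ table.

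First I would establish the range of each $Y_i$. The value $ans$ computed in \Cref{alg:Sampling} is the product of factors of the form $\frac{\sum_{(u,v)\in S} Dp[i][(u,v)]}{Dp[i+1][(u_{cur},v_{cur})]}$. By the definition of the dynamic programming, $Dp[i+1][(u_{cur},v_{cur})]$ counts all growing brooms of length $i+1$ ending at $(u_{cur},v_{cur})$, and $S$ is the subset of admissible predecessors, so each factor lies in $[0,1]$. Hence $ans_i \in [0,1]$ and $Y_i \in [0,B]$. Second, by \Cref{theorem:unbiased-C_hat} we have $\mathbb{E}[Y_i] = C$, so setting $X = \sum_{i=1}^T Y_i$ gives $\mathbb{E}[X] = T C$ and $X \in [0, T B]$.

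Next I would apply \Cref{lemma:Hoeffding} with $n = T$ and $Z = B$ in both directions. The upper-tail bound yields
\begin{align*}
\Pr\bigl(X \ge (1+\epsilon)\,T C\bigr) \le \exp\!\left(-\frac{2\epsilon^2 (TC)^2}{T B^2}\right) = \exp\!\left(-\frac{2\epsilon^2 T C^2}{B^2}\right),
\end{align*}
and the lower-tail bound is symmetric. Dividing by $T$ turns these into the corresponding bounds on $\hat{C}$ deviating from $C$ by a $(1\pm\epsilon)$ factor. A union bound then bounds the total failure probability by $2\exp(-2\epsilon^2 T C^2 / B^2)$.

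Finally I would solve for $T$ by requiring this union-bound to be at most $\alpha$:
\begin{align*}
2\exp\!\left(-\frac{2\epsilon^2 T C^2}{B^2}\right) \le \alpha \;\;\Longleftrightarrow\;\; T \ge \frac{B^2}{2\epsilon^2 C^2}\ln\!\left(\frac{2}{\alpha}\right),
\end{align*}
which is exactly the condition in the theorem. The only subtle step is verifying $Y_i \le B$ (equivalently $ans_i \le 1$); everything else is a routine instantiation of Hoeffding plus a union bound. Note that the statement as written gives a one-sided $(1+\epsilon)$ bound, but the same calculation (with the same constant) yields the two-sided $(1\pm\epsilon)$ guarantee, which is what the constant $\ln(2/\alpha)$ in the threshold suggests is intended.
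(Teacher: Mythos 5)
Your proposal is correct and follows essentially the same route as the paper's proof: bound $ans_i \times B \in [0,B]$, invoke the unbiasedness of \Cref{theorem:unbiased-C_hat}, apply Hoeffding's inequality (\Cref{lemma:Hoeffding}) to both tails, and solve $2\exp(-2\epsilon^2 T C^2/B^2) \le \alpha$ for $T$. Your explicit justification that each multiplicative factor lies in $[0,1]$ (because $S$ is a subset of the predecessors summed in the DP transition) is a slightly more careful version of the paper's one-line remark, but the argument is the same.
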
 

\begin{proof}
    We can derive $ans_i\times B \in [0,B]$ since the values multiplied to $ans$ are all probabilities in $[0,1]$.
    And we have $\mathbb{E}\left[ \hat{C}T \right] =CT$ based on \Cref{theorem:unbiased-C_hat}.
    Given a positive value $\epsilon$, applying~\Cref{lemma:Hoeffding} by plugging in $n=T$, $X_i =ans_i \times B$, $X=\sum_{i=1}^nX_i=\hat{C}T$, we then have
    \begin{align*}
        \Pr(X \ge (1+\epsilon)\mathbb{E}\left[X\right]) &= \Pr(\hat{C}T\ge (1+\epsilon)CT) \leq \exp(-\frac{2 \epsilon^2(CT)^2}{TB^2}), \\
        \Pr(X \le (1-\epsilon)\mathbb{E}\left[X\right]) &= \Pr(\hat{C}T\le (1-\epsilon)CT) \leq \exp(-\frac{2 \epsilon^2(CT)^2}{TB^2}).
    \end{align*}
    Further, we have 
       $\Pr(\frac{|\hat{C}-C|}{C}\ge \epsilon) \leq 2\exp(-\frac{2 \epsilon^2C^2T}{B^2}).$
       
    Let $2\exp(-\frac{2 \epsilon^2C^2T}{B^2}) \le \alpha,$ we can derive that $T \ge \frac{B^2}{2\epsilon^2 C^2}\ln(\frac{2}{\alpha}).$
\end{proof}

% the small number of (# of brooms/# of cliques) guarantees small sampling time $T$

\subsubsection{Time and Space Complexity.} 
The time complexity for each of the three parts in \texttt{CBS} are already discussed separately. We now give the following theorem as a summary.

\begin{theorem}
    \label{theorem:time}
The time complexity of \texttt{CBS} is $O(\kappa|E| + (p+q)|E| + T \cdot (p+q) \cdot \Delta)$, where $\Delta$ denotes the maximal degree in $G$. Specifically, to reach a $(1+\epsilon)$-approximation with probability $(1-\alpha)$, we require $T \ge \frac{B^2}{2\epsilon^2 C^2}\ln(\frac{2}{\alpha})$ where $C,B$ denotes the number of $(p,q)$-bicliques and $(p,q)$-brooms.
\end{theorem}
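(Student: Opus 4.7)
The plan is essentially an aggregation theorem: each of the three phases of \texttt{CBS} has already been analyzed individually in Sections 4.1–4.3, so the bulk of the work is in citing those bounds correctly and summing them, then invoking \Cref{theorem:estimation-error} for the sample-count requirement.

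First I would handle the coloring phase. In \Cref{sec:coloring} it is argued that each round of \kw{CalcColor} touches each edge a constant number of times using the lazy-propagation $status[v]=(t,w)$ trick, so a round costs $O(|E|)$. Since there are at most $\kappa$ rounds in total across the two calls to \kw{CalcColor} (one for $U$, one for $V$), the coloring phase costs $O(\kappa|E|)$. I would briefly re-state this before moving on.

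Next, for \kw{CountingIndex}, I would recall that \Cref{alg:CountingIndex} runs $p+q-2$ dynamic programming iterations, and in each iteration the transition sum at each edge $(u,v)$ can be evaluated in amortized $O(1)$ using a prefix-sum over neighbors sorted by color. Thus each round is $O(|E|)$, giving $O((p+q)|E|)$ total. For \kw{Sampling}, I would note that the loop in \Cref{alg:Sampling} runs $p+q-2$ times, and each iteration rebuilds $S$ by scanning $N(u_{cur},G)$ or $N(v_{cur},G)$, which is $O(\Delta)$. (The containment check $U'\subseteq N(w,G)$ can be done in $O(|U'|+|V'|)=O(p+q)$ time per neighbor, which I would absorb into the $\Delta$ factor under the implicit assumption $\Delta\ge p+q$, or otherwise stated as $O((p+q)\Delta)$ per sample.) Summed over $T$ samples this gives $O(T(p+q)\Delta)$.

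Adding the three contributions yields the first claim. For the second claim, I would directly invoke \Cref{theorem:estimation-error}, which already establishes that choosing
\begin{equation*}
T \;\ge\; \frac{B^2}{2\epsilon^2 C^2}\ln\!\left(\frac{2}{\alpha}\right)
\end{equation*}
suffices to guarantee that $\hat C$ lies within a $(1+\epsilon)$ factor of $C$ with probability at least $1-\alpha$; substituting this $T$ into the runtime expression then delivers the complete statement. The only mild subtlety — and the one place I would be careful — is the cost of the biclique-containment tests inside \Cref{alg:Sampling}; one should either argue $\Delta$ dominates $p+q$ in practice or make the $(p+q)$ factor explicit in the per-sample cost so that no hidden terms are dropped. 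No nontrivial obstacle otherwise; this theorem is a bookkeeping summary.
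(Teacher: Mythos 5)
Your proposal is correct and follows essentially the same route as the paper's own proof, which likewise just aggregates the per-phase bounds from \Cref{sec:coloring} and \Cref{sec:patterncounting}, charges $O((p+q)\Delta)$ per call to \Cref{alg:Sampling} over $T$ samples, and invokes \Cref{theorem:estimation-error} for the sample-size requirement. Your extra care about the cost of the containment tests $U'\subseteq N(w,G)$ is a reasonable refinement of a point the paper glosses over, but it does not change the argument.
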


\begin{proof}
    The first two terms follow from the analysis in \Cref{sec:coloring} and \Cref{sec:patterncounting}. The third term follows from the fact that we need to conduct the sampling process (\Cref{alg:Sampling}) for $T$ times, and each time costs $(p+q)\cdot \Delta$. The approximation guarantee comes directly from \Cref{theorem:estimation-error}.
\end{proof}

We now provide the memory usage analysis as follows. We have also conducted an empirical experiment on the memory consumption in \Cref{sec:memory}. 
\begin{theorem}
\label{theorem:memory}
The space complexity of \texttt{CBS} is $O(|U|+|V|+(p+q)\cdot|E|)$.
\end{theorem}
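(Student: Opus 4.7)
The plan is to prove \Cref{theorem:memory} by walking through the three subroutines of \texttt{CBS} and accounting for the memory consumed by the data structures maintained in each, then taking the maximum (or sum) of the contributions. Since the bound asserted is additive, I only need to identify the dominant structures; lower-order contributions will be absorbed.

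First, I would account for the input graph $G$ and the color assignment produced by \kw{Coloring}. Storing the bipartite graph in adjacency-list form costs $O(|U|+|V|+|E|)$, and the color array $c(\cdot)$ over all vertices adds $O(|U|+|V|)$. During \kw{CalcColor}, the auxiliary structures $Cnt$, $S$, $S_{next}$ and (under the lazy-propagation implementation described in \Cref{sec:coloring}) the $status$ array are each of size $O(|U|+|V|)$, so the coloring phase contributes $O(|U|+|V|+|E|)$ in total, well within the claimed bound.

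Next, I would bound the memory used by \kw{CountingIndex}. The dominant object here is the dynamic programming table $Dp[\,\cdot\,][\,\cdot\,]$, which is indexed by an edge-position $t \in \{1,\ldots,p+q-1\}$ and an edge $(u,v)\in E(G)$, giving $O((p+q)\cdot|E|)$ entries in total. The auxiliary array $Dir$ has length $O(p+q)$, and the sorted copies of $U(G), V(G), E(G)$ add only $O(|U|+|V|+|E|)$. Hence the counting phase contributes $O((p+q)\cdot|E|)$, which dominates the coloring contribution.

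Finally, I would observe that \kw{Sampling} reuses $Dp$ and $B$ as input and only maintains constant-sized scalars together with the growing sets $U',V'$ of size at most $p+q$ and the candidate set $S$ of size at most $\Delta \le \max(|U|,|V|)$; this is dominated by the already-counted terms. Summing the three contributions and absorbing the $|E|$ term of the graph storage into $(p+q)\cdot|E|$ (valid since $p,q\ge 2$), the total space is $O(|U|+|V|+(p+q)\cdot|E|)$, as claimed. The only mildly non-trivial step is justifying that $Dp$ truly requires an entry per $(t,\text{edge})$ pair rather than a sparser representation, but this follows from the fact that the recurrences in \Cref{alg:CountingIndex} read $Dp[t][(u,w)]$ for arbitrary incident edges and Algorithm~\ref{alg:Sampling} queries $Dp[i][(u,v)]$ for arbitrary $(u,v)\in S$, so every entry may be accessed and must therefore be stored.
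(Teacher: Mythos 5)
Your proposal is correct and follows essentially the same route as the paper: a phase-by-phase accounting in which the coloring structures contribute $O(|U|+|V|)$, the dynamic programming table dominates with $O((p+q)\cdot|E|)$, and the sampling phase's candidate set is bounded by the maximum degree and hence absorbed. The extra details you supply (graph storage, the $Dir$ array, and the justification that $Dp$ cannot be stored sparsely) are harmless refinements of the same argument.
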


\begin{proof}
In \Cref{alg:Coloring}, since we need to maintain $Cnt$ and $S_{next}$ which is at most $|U|+|V|$, the memory usage is $O(|U|+|V|)$. In \Cref{alg:CountingIndex}, the memory usage is dominated by the size of the DP table, thus $O((p+q)\cdot |E|)$. Algorithm 3 is for sampling a $(p,q)$-biclique, and $|S|$ is at most $\max \deg(u)$, thus the memory usage is $O(\max \deg(u))$. In all, the total space complexity is $O(|U|+|V|+(p+q)\cdot|E|)$.
\end{proof}

The \texttt{EP/Zz++} uses the heuristic framework to combine \texttt{EPivoter} (for exact counting on small portion of graphs) and \texttt{ZigZag++}. In the analysis of \texttt{ZigZag++}, its time and space complexity is $O((p+q)|E|+(p+q)^2 |V| + T \cdot \Delta \cdot (p+q)^2)$ and $O((p+q)|E|)$. Our memory usage is similar to each other. If we consider $T$ and $\kappa$ as a constant, our time complexity has some advantage as we don't have the term of $(p+q)^2$. However, it is obvious that both our algorithms' performance are highly determined by the size of $T$.

From \Cref{theorem:estimation-error}, the sample size $T$ is mainly determined by $(\frac{B}{C})^2$. That is to say, we need a larger sample size $T$ to ensure high accuracy with a larger $(\frac{B}{C})^2$, where $B$ is the number of $(p,q)$-brooms, $C$ is the number of $(p,q)$-bicliques. As shown in \Cref{sec:hyperparameters}, $(\frac{B}{C})^2$ is usually small comparing to the dominating term of \texttt{EP/Zz++} ($\frac{Z^2}{\rho^2}$)~\cite{ye2023efficient}. Therefore, our algorithm generally does not need a large $T$ to achieve good accuracy in real-world datasets. It can also be concluded that \texttt{CBS} guarantees an equivalent error ratio with significantly fewer sampling times and higher precision with the same sampling times compared to \texttt{EP/Zz++}.

\section{EVALUATION}
\label{sec:exp}
\begin{table}[t!]
\caption{Datasets used in experiments.}
\resizebox{1\linewidth}{!}{
\begin{tabular}{|l|rrrrr|} 
\hline
Graphs (Abbr.) & \multicolumn{1}{c}{\footnotesize Category} & \multicolumn{1}{c}{\footnotesize $|U|$} & \multicolumn{1}{c}{\footnotesize $|V|$} & \multicolumn{1}{c}{\footnotesize $|E|$} & \multicolumn{1}{c|}{\footnotesize $\max \kappa$} \\ 
\hline
\texttt{github (GH)} &   Authorship &	56,519 &	120,867 &	440,237 & 508\\
\texttt{StackOF (SO)} &  Rating &	545,195 &	96,678 &	1,301,942 & 290\\
\texttt{Twitter (Wut)} &	 Interaction &	175,214 &	530,418 &	1,890,661 & 2933\\
\texttt{IMDB (IMDB)} &   Affiliation &	685,568 &	186,414 &	2,715,604 & 158\\
\texttt{Actor2 (Actor2)} &	 Affiliation &	303,617 &	896,302 &	3,782,463 & 189\\
\texttt{Amazon (AR)} &	 Rating &	2,146,057 &	1,230,915 &	5,743,258 & 155\\
\texttt{DBLP (DBLP)} &	 Authorship &	1,953,085 &	5,624,219 &	12,282,059 & 126\\
\texttt{Epinions (ER)} &	 Rating &	120,492 &	755,760 &	13,668,320 & 13200\\
\texttt{Wikipedia-edits-de (DE)} &   Authorship &	1,025,084 &	5,910,432 &	129,885,939 & 118356\\
\hline
\end{tabular}
}
\label{tab:dataset}
\end{table}
% \mqy{I think we could move Table 2 here, as it also provides the crucial information about the datasets; probably, merging two tables are better for me.}

\subsection{Experimental Setting}
\label{sec:exp:setup}
\paragraph{Datasets.} We use nine real datasets from different domains, which are available at SNAP~\cite{stanford_snap}, Laboratory of Web
Algorithmics~\cite{unimi_law}, and Konect ~\cite{konect}.
Table~\ref{tab:dataset} shows the statistics of these graphs.

\paragraph{Methods.}
We compare our method with the baselines from highly related works, summarized as follows:
\begin{itemize}[leftmargin=*]
  \item \texttt{BCList++}~\cite{yang2021p}: the biclique listing-based algorithm,
which is based on the Bron-Kerbosch algorithm \cite{bron1973finding}.
The key idea of \texttt{BCList++} is to iteratively enumerate all $(p,q)$-bicliques containing each vertex through a node expansion process. Specifically, it employs an ordering-based search paradigm, where for each vertex, only its higher-order neighbors are considered during enumeration. Additionally, a graph reduction technique is applied to reduce the search space before biclique counting.
    \item \texttt{EPivoter}~\cite{ye2023efficient}: the state-of-the-art algorithm for exact biclique counting, which relies on the edge-based pivot technique.
   In \texttt{EPivoter}, an edge-based search framework is introduced. Unlike \texttt{BCList++}, it iteratively selects edges from the candidate set (i.e., the edge set used for biclique expansion) to expand the current biclique.
    Specifically, during the enumeration process, in each branch, it first select one edge as the pivot edge, and based on this edge, vertices can be grouped into four disjoint groups.
  By storing the entire enumeration tree along with the four vertex sets, biclique counting can then be efficiently performed using combination counting.
    \item \texttt{EP/Zz++}~\cite{ye2023efficient}: the state-of-the-art algorithm for approximate counting.
    It first partitions the graph into two regions: a dense region (a subgraph containing high-degree vertices) and a sparse region (a subgraph containing low-degree vertices).
    For the sparse region, \texttt{EP/Zz++} utilizes \texttt{EPivoter} for exact counting, while for the dense region, it proposes a zigzag path-based sampling algorithm for approximate counting. 
  Specifically, it leverages the fact that a $(p,q)$-biclique must contain a fixed number of $(min\{p,q\})$-zigzag paths. Based on this property, the algorithm first counts the number of $h$-zigzag paths and then samples $T$ such paths, where $h=min\{p,q\}$. 
  Finally, it estimates the number of $h$-bicliques based on their proportion with the sampled paths.
    \item  \texttt{CBS}: our proposed approximation algorithm (\Cref{sec:algo}).
\end{itemize}

Notice that we compare \texttt{EP/Zz++}  instead of \texttt{EP/Zz}~\cite{ye2023efficient}, since the former one is a better version of the latter in terms of efficiency and accuracy. For completeness, we also conduct some initial experiment on \texttt{EP/ZZ}. Specifically, on the two largest datasets, \texttt{ER} and \texttt{DE}, \texttt{EP/Zz} failed to produce results within $10^5$ seconds for any values of $p$ and $q$ $(3 \leq p, q \leq 9)$. Thus, it is not applicable to scalable datasets in practice. Thus, we focus our evaluation primarily on \texttt{EP/Zz++} in our main experiments.

There is also another potential baseline introduced in \cite{ye2024fast}. The method takes over 1000 seconds to run on the DBLP dataset, while our algorithm requires only around 1 second. We have performed a primary initial evaluation based on the conclusions reported in the article. They claim to outperform \texttt{BCList++} and \texttt{EPivoter} by two orders of magnitude in runtime. However, both baselines fail to return results on the \texttt{ER} and \texttt{DE} datasets within $10^5$ seconds. Even based on their own estimation, their method remains significantly slower than ours in practice. As a result, we do not include it as a baseline in our empirical study.
We implement all the algorithms in C++ and run experiments on a machine having an Intel(R) Xeon(R) Platinum 8358 CPU @ 2.60GHz and 512GB of memory, with Ubuntu installed.

\paragraph{Parameter Settings.} In
our experiments, we following the existing work~\cite{ye2023efficient} setting $T$=$10^5$
as default values.
We define the estimation error as $\frac{|\hat{C}-C|}{C}$, where $C$ denotes the exact count of bicliques, and $\hat{C}$ represents its approximated value.
To ensure the reliability of our results, we run each approximation algorithm 10 times, with the reported error representing the average across these executions.
For some datasets, the exact count of bicliques cannot be computed (e.g., ER and DE), we evaluate the estimation error differently by using the average approximated biclique count from 10 runs as the exact count (i.e., $C$).

\subsection{Overall Comparison Results}
In this section, we compare \texttt{CBS} with three competitors (introduced in Section \ref{sec:exp:setup}), w.r.t. overall efficiency, accuracy, and the effect of $p$ and $q$, and the number of samples to demonstrate the superior of our algorithm.

\paragraph{1. Efficiency of All Algorithms.} 
\Cref{fig:main_time} depicts the average running time of all
the biclique counting algorithms on nine datasets for counting all $3 \leq p, q\leq9$ bicliques.
We make the following observations and analysis:
\begin{enumerate}[leftmargin=*]
    \item Our method \texttt{CBS} is up to two orders of magnitude faster than all competitors, this is mainly because our algorithm has a better theoretical guarantee.
    \item On almost all datasets, \texttt{CBS} is at least $10 \times$ faster than all methods, except DBLP dataset. On this dataset, \texttt{BCList++} achieves the best performance, as graph reduction significantly reduces $|U|$ and $|V|$, allowing it to perform efficiently without extra steps. Meanwhile, \texttt{CBS} still outperforms \texttt{EPivoter} and \texttt{EP/Zz++}.
    \item On the two largest datasets, ER and DE, the exact algorithms fail to terminate in $10^5$ seconds, while both approximate algorithms successfully complete the task. Our method, \texttt{CBS}, demonstrates at least 3 times faster performance compared to \texttt{EP/Zz++}.
\end{enumerate}

% reason(我猜的，不一定对): since after core-reduction, all subgraphs are small, i.e. , $|U|,|V| \leq 60$, in this way, BCList++ can compute the answer without much pretreatment (预处理?) 

Note that besides \Cref{fig:main_time}, we choose to simply compare the sampling time with \texttt{EP/Zz++} in \Cref{figure:sampling_time_heatmap} and \Cref{figure:Timecurve}, which is for exposing the significant improvement in the sampling efficiency. 
% This is discussed below.
% \lwn{Actually it's only 3 times faster, so we may need to emphasize that we have a faster speed in sampling time.}

%The major results for five datasets:
%Instacart

\definecolor{c1}{RGB}{42,99,172} % META
\definecolor{c2}{RGB}{255,88,93}
% \definecolor{c3}{RGB}{255,181,73}
\definecolor{c3}{RGB}{208,167,39}
\definecolor{c4}{RGB}{119,71,64} % Fast3D
\definecolor{c5}{RGB}{228,123,121} %Basic2D
\definecolor{c6}{RGB}{175,171,172} % Gray
% \definecolor{c6}{RGB}{136,88,180}
\definecolor{c7}{RGB}{0,51,153}
\definecolor{c8}{RGB}{56,140,139} % Fast2D

\definecolor{c9}{RGB}{0,0,0} %black
% \definecolor{c10}{RGB}{140,138,185} %purple
\definecolor{c10}{RGB}{120,80,190} % 蓝紫色

\definecolor{c11}{RGB}{255,204,0}
\definecolor{c12}{RGB}{128,128,128}
\definecolor{c13}{RGB}{98,148,96}
%sci-color
\definecolor{c14}{RGB}{184,168,207}
\definecolor{c15}{RGB}{253,207,158}
\definecolor{c16}{RGB}{182,118,108}
\definecolor{c17}{RGB}{175,171,172}

\definecolor{color1}{RGB}{0, 120, 190} % Teal Blue
\definecolor{color3}{RGB}{200, 180, 60} % Muted Gold

\definecolor{color2}{RGB}{220, 50, 50} % Vibrant Red
% Test for running time of different variants 

\definecolor{p1}{RGB}{174,223,172} % green
\definecolor{p2}{RGB}{224,175,107}  % orangle
\definecolor{p3}{RGB}{138,170,214}  % light blue
\definecolor{p4}{RGB}{222,117,123} %  % red
\definecolor{p5}{RGB}{216,174,174} % light pink
\definecolor{p6}{RGB}{163,137,214} % purple
\definecolor{p7}{RGB}{248,199,1} %  % yellow
\definecolor{p8}{RGB}{205,205,205} % grey
\definecolor{p9}{RGB}{255,0,127} % purple

\definecolor{t1}{RGB}{148, 190, 146} % 深绿色
\definecolor{t2}{RGB}{190, 149, 91} % 深橙色
\definecolor{t3}{RGB}{0, 153, 255}
\definecolor{t4}{RGB}{255, 0, 51}

%%%
\definecolor{aya1}{RGB}{71, 85, 142}
\definecolor{aya2}{RGB}{105, 130, 185}
\definecolor{aya3}{RGB}{130, 45, 74}
\definecolor{aya4}{RGB}{174, 107, 129}

\definecolor{yoi1}{RGB}{205, 68, 50}
\definecolor{yoi2}{RGB}{235, 145, 99}

\definecolor{gy1}{RGB}{93, 116, 162}
\definecolor{gy2}{RGB}{196, 216, 242}

\definecolor{ci1}{RGB}{43, 48, 122}
\definecolor{ci2}{RGB}{119, 194, 243}
\definecolor{ci3}{RGB}{169, 111, 176}
\definecolor{ci4}{RGB}{216, 160, 199}

\definecolor{ni1}{RGB}{20, 54, 95}
\definecolor{ni2}{RGB}{118, 162, 185}
\definecolor{ni3}{RGB}{214, 79, 56}

\definecolor{hu1}{RGB}{199, 160, 133}
\definecolor{hu2}{RGB}{201, 71, 55}

\definecolor{kira1}{RGB}{105, 169, 78}
\definecolor{kira2}{RGB}{234, 199, 114}

\definecolor{yayi1}{RGB}{85, 59, 148}
\definecolor{yayi2}{RGB}{152, 114, 202}

\definecolor{ke1}{RGB}{69, 51, 112}
\definecolor{ke2}{RGB}{165, 151, 182}
 
\pgfplotstableread[col sep=comma]{Data/MainTime.csv}{\MainTime}
\pgfplotstableread[col sep=comma]{Data/DetailedCmp.csv}{\DetailedCmp}
\pgfplotstableread[col sep=comma]{Data/AR_Err.csv}{\ARErr}
\pgfplotstableread[col sep=comma]{Data/Wut_Err.csv}{\WutErr}
\pgfplotstableread[col sep=comma]{Data/AR_SampleTime.csv}{\ARSampleTime}
\pgfplotstableread[col sep=comma]{Data/ER_SampleTime.csv}{\ERSampleTime}

\pgfplotstableread[row sep=\\,col sep=&]{
    index & T & UncolTime & ColTime \\
    2 & $10^2$ & 153 & 371 \\
    4 & $10^3$ & 153 & 371 \\
    6 & $10^4$ & 155 & 372 \\
    8 & $10^5$ & 166 & 383 \\
    10 & $10^6$ & 274 & 491 \\
}\AblationERTime

\pgfplotstableread[row sep=\\,col sep=&]{
    index & T & UncolTime & ColTime & ACBS & DCBS\\
    2 & $10^2$ & 153 & 371 & 334 & 401\\
    4 & $10^3$ & 153 & 371 & 337 & 402\\
    6 & $10^4$ & 155 & 372 & 337 & 403\\
    8 & $10^5$ & 166 & 383 & 348 & 413\\
    10 & $10^6$ & 274 & 491 & 452 & 519\\
}\AblationERTimeUpd
\begin{figure}[t!]
    \centering
       \begin{tikzpicture}[scale=0.45]
            \begin{axis}[
                    grid = major,
        		ybar=0.11pt,
        		bar width=0.3cm,
        		width=0.90\textwidth,
    			height=0.38\textwidth,
        		xlabel={\huge \bf Dataset}, 
        		xtick=data,	xticklabels={GH,SO,Wut,IMDB,Actor2,AR,DBLP,ER,DE},
                    legend style={at={(0.5,1.30)}, anchor=north,legend columns=-1,draw=none},
                    legend image code/.code={
                    \draw [#1] (0cm,-0.263cm) rectangle (0.4cm,0.15cm); },
        		xmin=0.8,xmax=19.2,
    			ymin=0.01, ymax = 100000,
                    ytick = {0.01, 0.1, 1, 10, 100, 1000, 10000, 100000},
    	        yticklabels = {$10^{-2}$, $10^{-1}$, $10^0$,$10^1$, $10^2$, $10^3$, $10^4$, $\geq 10^{5}$},
                    ymode = log,    
                    log basis y={2},
                    log origin=infty,
        		tick align=inside,
        		ticklabel style={font=\huge},
        	    every axis plot/.append style={line width = 1.6pt},
        		every axis/.append style={line width = 1.6pt},
                    ylabel={\textbf{\huge time (s)}},
        	]
        			\addplot[fill=p1] table[x=dataset,y=BCList++]{\MainTime};
        			\addplot[fill=p2] table[x=dataset,y=EPivoter]{\MainTime};
                        \addplot[fill=p3] table[x=dataset,y=EPZZ++]{\MainTime};
                        \addplot[fill=p4] table[x=dataset,y=OurMethod]{\MainTime};
                \legend{\huge {\tt BCList++ $\ $},\huge {\tt EPivoter $\ $}, \huge {\tt EP/Zz++ $\ $},\huge {\tt CBS}}
            \end{axis}
        \end{tikzpicture}
        \caption{Average runtime of different biclique counting algorithms for all $3 \leq p,q \leq 9$.}
    \label{fig:main_time}
\end{figure}

\begin{figure}[t!]
    \centering
       \begin{tikzpicture}[scale=0.45]
            \begin{axis}[
                    grid = major,
        		ybar=0.11pt,
        		bar width=0.3cm,
        		width=0.90\textwidth,
    			height=0.38\textwidth,
        		xlabel={\huge \bf Dataset}, 
        		xtick=data,	xticklabels={GH,SO,Wut,IMDB,Actor2,AR,DBLP,ER,DE},
                    legend style={at={(0.5,1.30)}, anchor=north,legend columns=-1,draw=none},
                    legend image code/.code={
                    \draw [#1] (0cm,-0.263cm) rectangle (0.4cm,0.15cm); },
        		xmin=0.8,xmax=19.2,
    			ymin=0, ymax = 70,
                    ytick = {0, 10, 20, 30, 40, 50, 60, 70},
    	        yticklabels = {$0$, $10$, $20$,$30$, $40$, $50$, $60$, $70$},
                    % ymode = log,    
                    % log basis y={2},
                    % log origin=infty,
        		tick align=inside,
        		ticklabel style={font=\huge},
        	    every axis plot/.append style={line width = 1.6pt},
        		every axis/.append style={line width = 1.6pt},
                    ylabel={\textbf{\huge error (\%)}},
        	]
        			\addplot[fill=ni2] table[x=dataset,y=EPZZ++_error]{\DetailedCmp};
        			\addplot[fill=ni3] table[x=dataset,y=OurMethod_error]{\DetailedCmp};
                \legend{\huge {\tt EP/Zz++ $\ $},\huge {\tt CBS}}
            \end{axis}
        \end{tikzpicture}
        \caption{Average error of EP/Zz++ and CBS for all $3 \leq p,q \leq 9$.}
    \label{fig:error_cmp}
\end{figure}

\paragraph{2. Accuracy of All Algorithms.} Figure~\ref{fig:error_cmp} illustrates the average error rates of \texttt{CBS} and \texttt{EP/Zz++} across all datasets for biclique sizes ranging from 3 to 9 (i.e., $3 \leq p,q \leq 9$).
We can see that our algorithm  demonstrates up to a $8 \times$ reduction in error compared to \texttt{EP/Zz++}, thanks to our carefully designed coloring scheme and 
$(p,q)$-broom-based sampling technique.
Note that for the first three smaller datasets, \texttt{EP/Zz++} exhibits significant errors, with at least 38.9\% inaccuracy, rendering its results unreliable. In contrast, our method, \texttt{CBS}, maintains a maximum error rate of 16.9\% under the same number of sampling rounds. Our algorithm also consistently obtains an outstanding accuracy in larger datasets. A minor exception occurs in \texttt{ER} that our algorithm obtains a slightly worse error ratio in only one case $(p,q)=(4,9)$. To compensate for this, since our sampling time is more than 70 times faster compared to \texttt{EP/Zz++} (\Cref{figure:sampling_time_heatmap} (a)(b)), by simply increasing the sampling times, we can decrease the error ratio while still outperforming the baseline. Initially, we sampled $10^5$ times. By sampling $5\times 10^5$, $10^6$ and $7\times 10^6$, our error ratio decreases to $1.06\%$, $1.01\%$, $0.34\%$, respectively. These results all outperform \texttt{EP/Zz++} in both error ratio and sampling time.

Combining these observations with the performance results shown in Figure~\ref{fig:main_time}, we can conclude that our algorithm demonstrates an even greater advantage when considering the trade-off between accuracy and efficiency. This indicates that to achieve the same error rate, our algorithm would likely exhibit an even more substantial performance advantage over \texttt{EP/Zz++}.

\begin{figure}[t!]
    \centering
    \subfigure[ER, EP/Zz++]{
        \centering
        \includegraphics[width=0.45\linewidth]{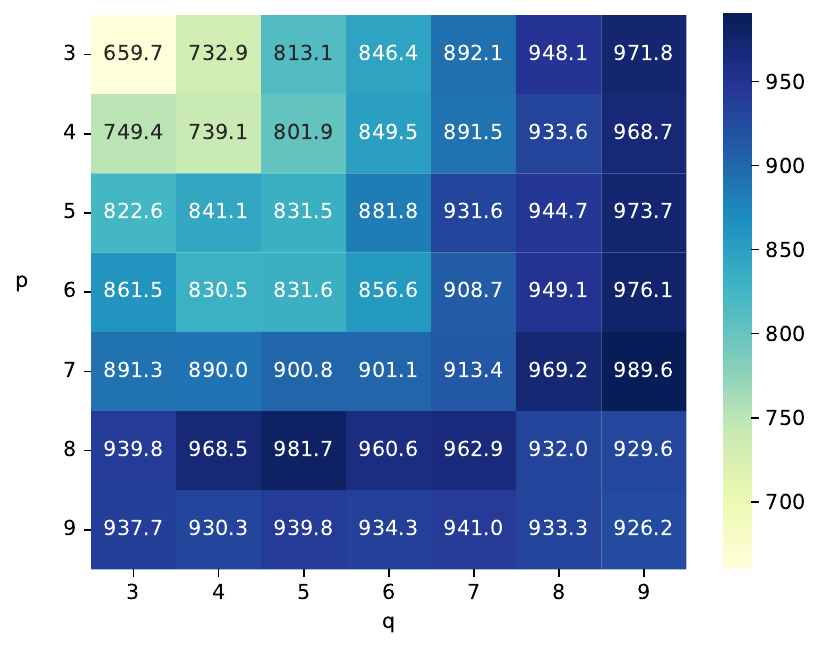}
    }
    \hspace{-0.1mm}
    \subfigure[ER, CBS]{
        \centering
        \includegraphics[width=0.45\linewidth]{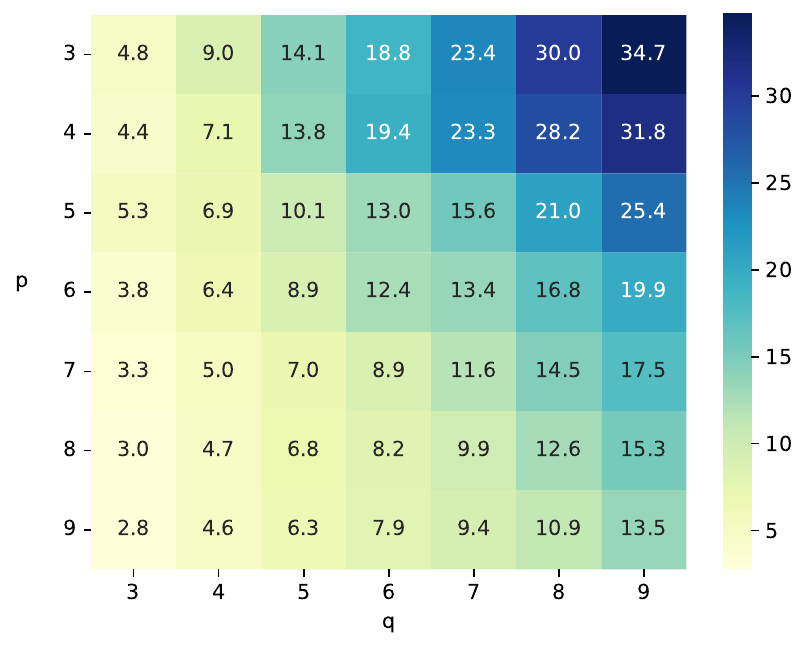}
    }
    \vspace{0.1cm}

    \subfigure[DE, EP/Zz++]{
        \centering
        \includegraphics[width=0.45\linewidth]{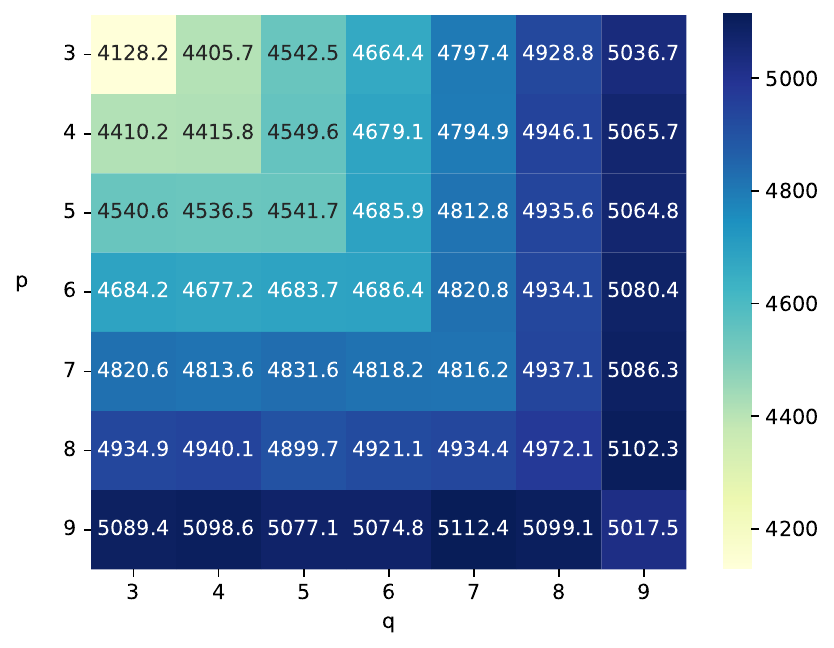}
    }
    \hspace{-0.1mm}
    \subfigure[DE, CBS]{
        \centering
        \includegraphics[width=0.45\linewidth]{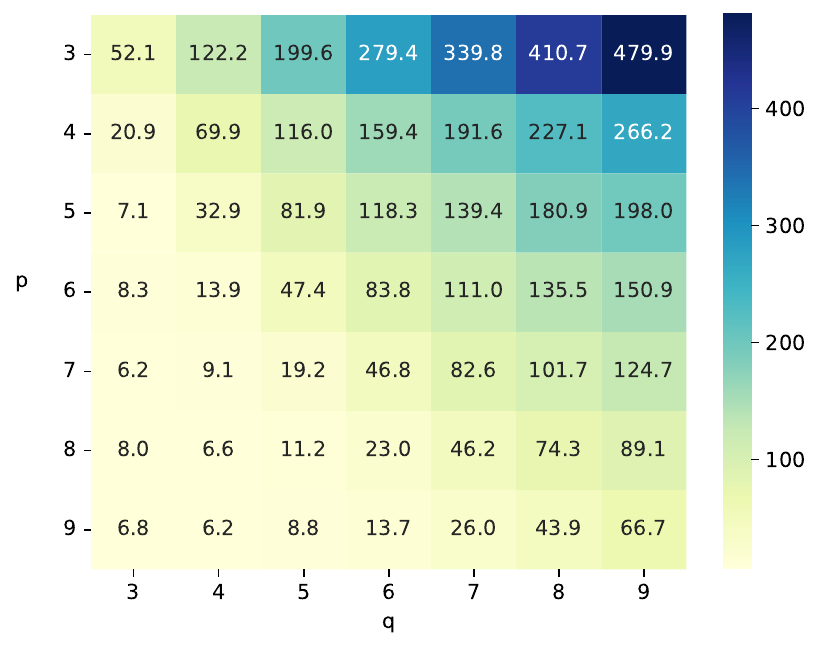}
    }
    \caption{The heat-map of sampling time of EP/Zz++ and CBS with varying $p$ and $q$ (s).}
    \label{figure:sampling_time_heatmap}
\end{figure}

\begin{figure}[t!]
    \centering
    \subfigure[Wut, EP/Zz++]{
        \centering
        \includegraphics[width=0.45\linewidth]{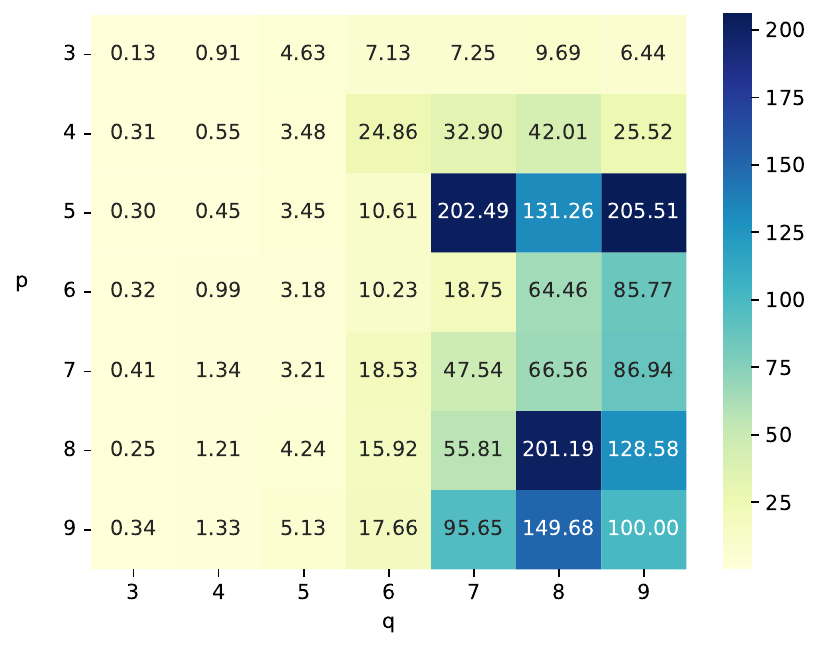}
    }
    \hspace{-0.1mm}
    \subfigure[Wut, CBS]{
        \centering
        \includegraphics[width=0.45\linewidth]{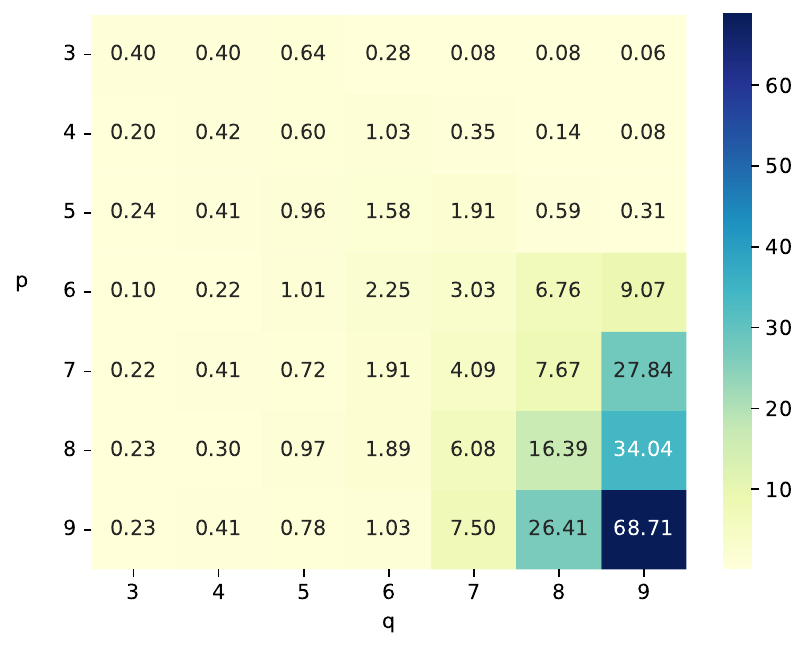}
    }
    \vspace{0.1cm}

    \subfigure[AR, EP/Zz++]{
        \centering
        \includegraphics[width=0.45\linewidth]{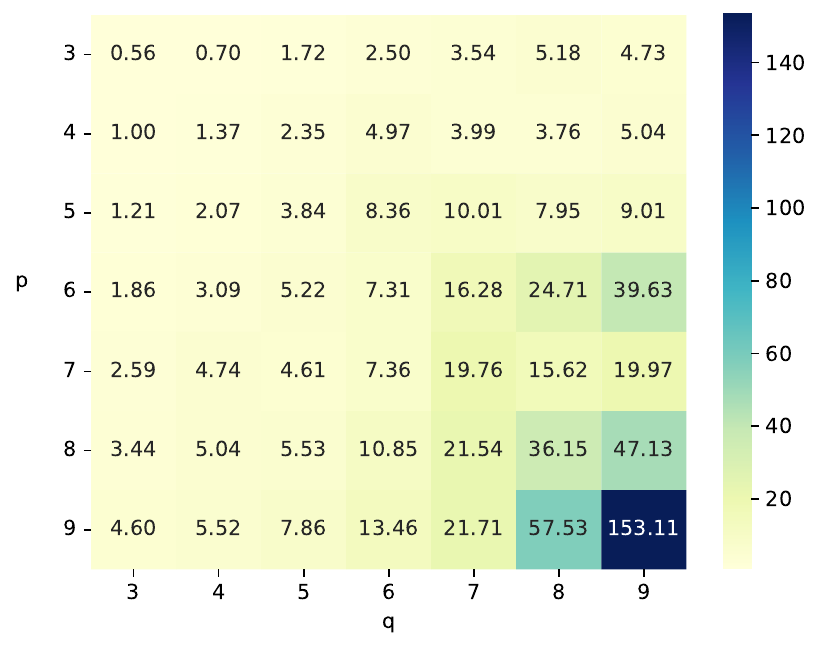}
    }
    \hspace{-0.1mm}
    \subfigure[AR, CBS]{
        \centering
        \includegraphics[width=0.45\linewidth]{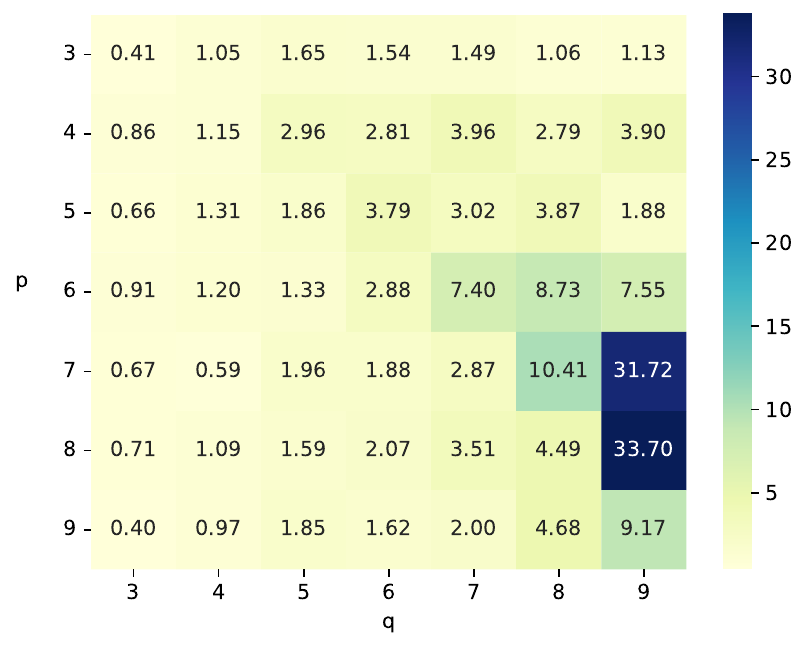}
    }
    \caption{The heat-map of estimation errors of EP/Zz++ and CBS with varying $p$ and $q$ (\%).}
    \label{figure:error_heatmap}
\end{figure}
\vspace{-1ex}
\paragraph{3. Effect of $p$ and $q$.}
Figures \ref{figure:sampling_time_heatmap} and \ref{figure:error_heatmap} illustrate the sampling time and error rate, respectively, of \texttt{CBS} and \texttt{EP/Zz++} across various $p$ and $q$ values.
In each figure, rows represent $p$
values and columns represent $q$ values. Each cell displays the sampling time (in Figure \ref{figure:sampling_time_heatmap}) or the estimation error (in Figure \ref{figure:error_heatmap}) for counting the corresponding $(p,q)$-bicliques.
Clearly, our method, \texttt{CBS}, outperforms \texttt{EP/Zz++} by up to two orders of magnitude in both sampling time and accuracy. For instance, with $p=5$ and $q=3$ on the ER dataset, \texttt{EP/Zz++} requires 822.6 seconds for sampling, whereas our algorithm completes this stage in just 5.3 seconds. 

In terms of accuracy, on the Wut dataset with $p=5$ and $q=7$, \texttt{EP/Zz++} has an estimation error of 202.49, while our algorithm achieves a significantly lower error of 1.91. This observation aligns with our analysis in Section 4. 
As a minor observation in Figure \ref{figure:sampling_time_heatmap}, the sampling time of \texttt{CBS} is more affected by q than by p while \texttt{EP/Zz++} does not exhibit this characteristic. This observation indeed demonstrates the difference between \texttt{Ep/Zz++} and our algorithm to some levels. Specifically, \texttt{EP/Zz++} needs to sample a zig-zag path with $h=\min(p,q)$ to approximate the number of $(p,q)$-bicliques, therefore its sampling time increases in a staircase manner w.r.t $\min(p,q)$. On the other hand, our algorithm samples brooms, which is different from \texttt{EP/Zz++}, this could be why we exhibit different characteristics. In our case, one of the possible reasons is that we start "growing" a broom from the left side during the sampling process.
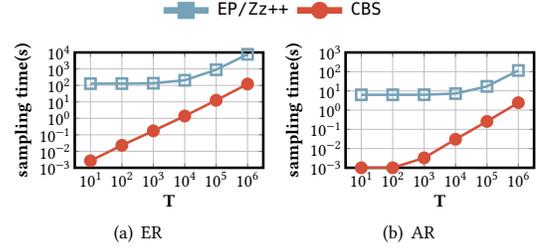
\begin{figure}[t!]
    \centering
    \ref{Tcurve}\\
    \setlength{\abovecaptionskip}{-0.01cm}
    \subfigure[ER]{
    \begin{tikzpicture}[scale=0.4]
            \begin{axis}[
                mark size=5.0pt, 
                width=0.65\textwidth,
                height=0.31\textwidth,
                grid = major,
                % legend to name,
                xtick = {2,4,6,8,10,12},
                xticklabels = {$10^1$, $10^2$, $10^3$, $10^4$, $10^5$, $10^6$},
                ymin=0.001, ymax=10000,
                ytick = {0.001,0.01, 0.1, 1, 10, 100, 1000, 10000},
                yticklabels = {$10^{-3}$,$10^{-2}$, $10^{-1}$, $10^0$,$10^1$, $10^2$, $10^3$, $10^4$},
                ymode = log,
                log basis y={3},
                log origin=infty,
                xlabel={\Huge \bf T},
                ylabel={\Huge \bf sampling time(s)}, 
                ticklabel style={font=\Huge},
                every axis plot/.append style={line width = 2.5pt},
                every axis/.append style={line width = 2.5pt},
                ]
                \addplot[mark=square,color=ni2] table[x=dataset,y=EPZZ++] {\ERSampleTime};
                \addplot[mark=*,color=ni3] table[x=dataset,y=OurMethod]{\ERSampleTime};
            \end{axis}
    \end{tikzpicture}
    }
    \subfigure[AR]{
    \begin{tikzpicture}[scale=0.4]
            \begin{axis}[
                legend to name=Tcurve,
                legend style = {
				    legend columns=-1,
				    font=\small,
				    inner sep = 0pt,
				    draw=none,
			},
                mark size=5.0pt, 
                width=0.65\textwidth,
                height=0.31\textwidth,
                grid = major,
                % legend to name,
                xtick = {2,4,6,8,10,12},
                xticklabels = {$10^1$, $10^2$, $10^3$, $10^4$, $10^5$, $10^6$},
                ymin=0.001, ymax=1000,
                ytick = {0.001,0.01, 0.1, 1, 10, 100, 1000},
                yticklabels ={$10^{-3}$,$10^{-2}$, $10^{-1}$, $10^0$,$10^1$, $10^2$, $10^3$},
                ymode = log,
                log basis y={3},
                log origin=infty,
                xlabel={\Huge \bf T},
                ylabel={\Huge \bf sampling time(s)}, 
                ticklabel style={font=\Huge},
                every axis plot/.append style={line width = 2.5pt},
                every axis/.append style={line width = 2.5pt},
                ]
                \addplot[mark=square,color=ni2] table[x=dataset,y=EPZZ++] {\ARSampleTime};
                \addplot[mark=*,color=ni3] table[x=dataset,y=OurMethod]{\ARSampleTime};
                \legend{{\tt EP/Zz++},{\tt CBS}  }
            \end{axis}
    \end{tikzpicture}
    }
    \caption{Average sampling time of EP/Zz++ and CBS with varying $T$.}
    \label{figure:Timecurve}
\end{figure}

\begin{figure}[t!]
    \centering
    \vspace{1ex}
    \ref{Errorcurve}\\
    \setlength{\abovecaptionskip}{-0.01cm}
    \subfigure[Wut]{
    \begin{tikzpicture}[scale=0.4]
            \begin{axis}[
                mark size=5.0pt, 
                width=0.65\textwidth,
                height=0.31\textwidth,
                grid = major,
                % legend to name,
                xtick = {2,4,6,8,10,12},
                xticklabels = {$10^1$, $10^2$, $10^3$, $10^4$, $10^5$, $10^6$},
                ymin=0, ymax = 100,
                ytick = {1,3,10,30,100},
    	    yticklabels = {$1$, $3$, $10$, $30$,   $100$},
                ymode = log,
                xlabel={\Huge \bf T},
                ylabel={\Huge \bf error(\%)}, 
                ticklabel style={font=\Huge},
                every axis plot/.append style={line width = 2.5pt},
                every axis/.append style={line width = 2.5pt},
                ]
                \addplot[mark=square,color=kira1] table[x=dataset,y=EPZZ++] {\WutErr};
                \addplot[mark=*,color=kira2] table[x=dataset,y=OurMethod]{\WutErr};
            \end{axis}
    \end{tikzpicture}
    }
    \subfigure[AR]{
    \begin{tikzpicture}[scale=0.4]
            \begin{axis}[
                legend to name=Errorcurve,
                legend style = {
				    legend columns=-1,
				    font=\small,
				    inner sep = 0pt,
				    draw=none,
			},
                mark size=5.0pt, 
                width=0.65\textwidth,
                height=0.31\textwidth,
                grid = major,
                % legend to name,
                xtick = {2,4,6,8,10,12},
                xticklabels = {$10^1$, $10^2$, $10^3$, $10^4$, $10^5$, $10^6$},
                ymin=0, ymax = 100,
                ytick = {1,3,10,30,100},
    	    yticklabels = {$1$, $3$, $10$, $30$, $100$},
                ymode = log,
                xlabel={\Huge \bf T},
                ylabel={\Huge \bf error(\%)}, 
                ticklabel style={font=\Huge},
                every axis plot/.append style={line width = 2.5pt},
                every axis/.append style={line width = 2.5pt},
                ]
                \addplot[mark=square,color=kira1] table[x=dataset,y=EPZZ++] {\ARErr};
                \addplot[mark=*,color=kira2] table[x=dataset,y=OurMethod]{\ARErr};
                \legend{{\tt EP/Zz++},{\tt CBS}  }
            \end{axis}
    \end{tikzpicture}
    }
    \caption{Average error of EP/Zz++ and CBS with varying $T$.}
    \label{figure:T_curve}
\end{figure}
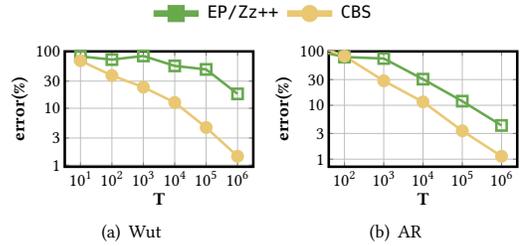

\paragraph{4. Effect of $T$.} We evaluate the effect of sample numbers on sampling time and error rate. The results on three datasets are shown in Figures \ref{figure:Timecurve} and \ref{figure:T_curve}. Based on these results, we observe the following: 
\vspace{-3.5ex}
\begin{enumerate}[leftmargin=*]
    \item For sampling time, our algorithm consistently outperforms \texttt{EP/Zz++}, particularly when the sample numbers are relatively low (e.g., from $10^1$ to $10^3$).
    \item Our algorithm consistently produces more accurate results than \texttt{EP/Zz++}, regardless of whether the sample size is high or low.
    \item Even with very few samples, our algorithm achieves acceptable solutions. For instance, on the AR dataset, \texttt{CBS} requires only $10^3$ samples to achieve a solution with a $30\%$ error rate, whereas \texttt{EP/Zz++} requires over $10^4$ samples.
\end{enumerate}

% \vspace{-1ex}
% \noindent\paragraph{\textbf{Detailed Analysis}}
% In addition, more detailed analysis about \texttt{CBS} is provided in \Cref{exp:detail}, including \textit{Ablation Study}, \textit{Time Cost of Different Stages}, and \textit{Statistical of the Hyper-parameters}.
% We only summarize the key conclusions here:
% (1) In high-accuracy scenarios, our coloring technique can significantly improve accuracy without highly impacting sampling time.
% (2) On more than half of the datasets, our algorithm requires less initializing and sampling time, compared to \texttt{EP/Zz++}.
% (3) When error rate $\epsilon$ is fixed, our algorithm \texttt{CBS} consistently requires fewer samples than \texttt{EP/Zz++}.

\section{Detailed Analysis}
\label{exp:detail}
In this section, to provide a better understanding of our algorithm, besides the primary evaluation in \Cref{sec:exp}, we now extensively evaluate and analyze \texttt{CBS} from different angles:

\textbf{$\bullet$} An ablation study on the effect of our coloring technique, showing that it can significantly improve accuracy without highly impacting sampling time.

\textbf{$\bullet$} An ablation study on the coloring technique conducted using different ordering strategies.

\textbf{$\bullet$} A study on the time cost of each step in \texttt{CBS}, including the initialization and sampling time, compared to the baseline \texttt{EP/Zz++}.

\textbf{$\bullet$} A study on the statistical of the hyper-parameters that determines the error ratio, which provides evidence that our algorithm \texttt{CBS} consistently requires fewer samples than \texttt{EP/Zz++}.

\textbf{$\bullet$} A study of the memory consumption of our algorithm.

\textbf{$\bullet$} An additional scalability experiment conducted by randomly sampling different percentages of vertices in the same graph.

\textbf{$\bullet$} An additional efficiency experiment with small $p,q$ aligning with the empirical study in \cite{ye2023efficient}.
% We only summarize the key conclusions here:
% (1) In high-accuracy scenarios, our coloring technique 
% (2) On more than half of the datasets, our algorithm requires less initializing and sampling time, compared to \texttt{EP/Zz++}.
% (3) When error rate $\epsilon$ is fixed, 
% we provide a detailed analysis of \texttt{CBS}. We begin by evaluating the impact of our coloring technique, report the initialization and sampling times, and analyze the values of $\frac{B^2}{C^2}$ and $\frac{Z^2}{\rho^2}$ under varying $p$ and $q$ to demonstrate the superiority of \texttt{CBS}.
% \vspace{-2.5ex}
\subsection{Ablation Study on the Coloring Technique} 
\label{exp:color}
We conduct an ablation study on the effect of our coloring technique. 
To begin with, in order to evaluate the effect of our coloring technique, we design a new variant of \texttt{CBS} by removing the vertex coloring step, denoted by \texttt{BS}. Then, note that in our proposed coloring methods, we decide to traverse the neighbors in a random order manner. A question is raised that if we follow different order of neighbors, would the performance be significantly affect? To answer this question, we design three more variants of \texttt{CBS} by using two different ordering strategies in coloring: the ascending order and descending order of degrees, denoting as \texttt{ACBS} and \texttt{DCBS}. 
 
To compare the inaccuracy, we first run these variants on the two datasets AR and Wut with varying parameters of $T$, and report the results in \Cref{fig:ablation-study2}. Compared to \texttt{BS} (without the coloring), \texttt{CBS} generally achieves lower error with the same number of samples, aligning with our previous analysis. Although the coloring technique initially slightly increases the sampling time, the gap diminishes as the number of sampling iterations increases. This demonstrates that in high-accuracy scenarios, our coloring technique can significantly enhance accuracy without significantly impacting the sampling time. Compared to \texttt{ACBS} and \texttt{DCBS}, there is no significant difference in error in the three strategies. This confirms that our algorithm is robust to the choice of coloring order.

In \Cref{fig:ablation-study3}, we run all four algorithms on the dataset ER with varying values of $T$ to compare their efficiency. We can see that without the coloring (\texttt{BS}), the total runtime does not decrease much. On the other hand, the average runtime of \texttt{CBS}, \texttt{ACBS}, and \texttt{DCBS} is similar. This again confirms the robustness of our coloring method. Combining with the accuracy result, we can see that the coloring method is effective, as with a slight increase in runtime, the accuracy is enhanced. Overall, our ablation study demonstrates that the coloring process is vital in counting bicliques.

\begin{figure}[t!]
    \centering
    \begin{minipage}[t]{0.9\linewidth}  % 使用 [t] 来顶部对齐
        \centering
        \resizebox{1.0\linewidth}{!}{
            \begin{tabular}{c|cccc|cccc}
                \bottomrule  
                \multirow{2}{*}{$T$} & \multicolumn{4}{c|}{AR}  & \multicolumn{4}{c}{Wut} \\ 
                \cline{2-9} 
               & \tt BS  & \tt CBS & \tt ACBS & \tt DCBS & \tt BS  & \tt CBS & \tt ACBS & \tt DCBS \\
                 \midrule
                 $10^2$ & 72.75 & 81.63 & 98.50 & 78.99 & 60.60 & 37.70 & 87.07 & 33.91 \\
                 $10^3$ & 43.33 & 28.47 & 32.44 & 32.43 & 22.03 & 23.50 & 24.48 & 21.16 \\
                 $10^4$ & 13.08 & 11.54 & 11.55 & 10.62 & 13.51 & 12.64 & 13.41 & 12.89\\
                 $10^5$ & 4.37 & 3.35 & 3.92 & 3.40 & 5.11 & 4.59 & 4.44 & 3.98\\
                 $10^6$ & 1.35 & 1.13 & 1.19 & 1.06 & 1.71 & 1.43 & 1.67 & 1.42\\
                \bottomrule      
            \end{tabular}
        }
    \end{minipage}
    \caption{Average error of different coloring techniques 
 in AR and Wut with varying $T$.}
    \label{fig:ablation-study2}
\end{figure}

\begin{figure}[t!]
    \centering
    \begin{minipage}{0.6\linewidth}  % 同样是 [t] 对齐
        \centering
        \ref{Colorcurve}\\
        \setlength{\abovecaptionskip}{-0.01cm}
        \centering
        \begin{tikzpicture}[scale=0.6]
            \begin{axis}[
                legend to name = Colorcurve,
                legend style = {
                        %legend pos=south east,
				    legend columns=-1,
				    font=\small,
				    inner sep = 0pt,
				    draw=none,
			},
                mark size=4.0pt, 
                width=1.5\textwidth,
                height=1.0\textwidth,
                grid = major,
                xtick = {2,4,6,8,10},
                xticklabels = {$10^2$, $10^3$, $10^4$, $10^5$, $10^6$},
                ymin=50, ymax = 800,
                ytick = {50,100,200,400,800},
    	    yticklabels = {$50$,$100$,$200$,$400$,$800$},
                ymode = log,
                xlabel={\huge \bf $T$},
                ylabel={\huge \bf total time (s)}, 
                ticklabel style={font=\huge},
                every axis plot/.append style={line width = 1.5pt},
                every axis/.append style={line width = 1.5pt}
                ]
                \addplot[mark=square,color=ni2] table[x=index,y=UncolTime] {\AblationERTimeUpd};
                \addplot[mark=*,color=ni3] table[x=index,y=ColTime]{\AblationERTimeUpd};
                \addplot[mark=triangle,color=kira1] table[x=index,y=ACBS]{\AblationERTimeUpd};
                \addplot[mark=o,color=kira2] table[x=index,y=DCBS]{\AblationERTimeUpd};
                \legend{{\tt BS},{\tt CBS},{\tt ACBS},{\tt DCBS}};
            \end{axis}
        \end{tikzpicture}
    \end{minipage}
    \caption{Average runtime of different coloring techniques 
 in ER with varying $T$.}
    \label{fig:ablation-study3}
\end{figure}

\subsection{Time Cost of Different Stages}
\label{exp:stage}
The two approximation algorithms, \texttt{CBS} and \texttt{EP/Zz++}, both require an initialization stage to precompute some auxiliary data for sampling. 
Specifically, in \texttt{CBS}, we need to assign a color number for each vertex (i.e., coloring) and count the number of $(p,q)$-brooms in bipartite graph, while in \texttt{EP/Zz++}, it needs to count the number of $h$-zigzag paths bipartite graph,  where $h$ = $\min\{p,q\}$.
In Figure ~\ref{fig:init_sample_cmp}, we report the initializing and sampling times for these algorithms across all datasets.
We observe that on more than half of the datasets, our algorithm requires less initializing and sampling time, which indicates that the number of $(p,q)$-brooms in the bipartite graph is typically less than $h$-zigzag paths (using in \texttt{EP/Zz++}). In terms of sampling time, our algorithm is up to two orders of magnitude faster than \texttt{EP/Zz++}. While on two datasets, GH and ER, \texttt{EP/Zz++} is 10 times faster in initializing, \texttt{CBS} achieves 100 times lower sampling time, making the total time (i.e., initializing plus sampling) of our algorithm still highly efficient.

\begin{figure}[t!]
    \centering
       \begin{tikzpicture}[scale=0.45]
            \begin{axis}[
                    grid = major,
        		ybar=0.11pt,
        		bar width=0.3cm,
        		width=0.90\textwidth,
    			height=0.38\textwidth,
        		xlabel={\huge \bf Dataset}, 
        		xtick=data,	xticklabels={GH,SO,Wut,IMDB,Actor2,AR,DBLP,ER,DE},
                    legend style={at={(0.5,1.30)}, anchor=north,legend columns=-1,draw=none},
                    legend image code/.code={
                    \draw [#1] (0cm,-0.263cm) rectangle (0.4cm,0.15cm); },
        		xmin=0.8,xmax=19.2,
    			ymin=0.01, ymax = 100000,
                    ytick = {0.01, 0.1, 1, 10, 100, 1000, 10000, 100000},
    	        yticklabels = {$10^{-2}$, $10^{-1}$, $10^0$,$10^1$, $10^2$, $10^3$, $10^4$, $\geq 10^{5}$},
                    ymode = log,    
                    log basis y={2},
                    log origin=infty,
        		tick align=inside,
        		ticklabel style={font=\huge},
        	    every axis plot/.append style={line width = 1.6pt},
        		every axis/.append style={line width = 1.6pt},
                    ylabel={\textbf{\huge time (s)}},
        	]
        			\addplot[fill=gy1] table[x=dataset,y=EPZZ++_init]{\DetailedCmp};
        			\addplot[fill=gy2] table[x=dataset,y=OurMethod_init]{\DetailedCmp};
                        \addplot[fill=ci3] table[x=dataset,y=EPZZ++_sample]{\DetailedCmp};
                        \addplot[fill=ci4] table[x=dataset,y=OurMethod_sample]{\DetailedCmp};
                \legend{\huge {\tt EP/Zz++\_init $\ $},\huge {\tt CBS\_init $\ $}, \huge {\tt EP/Zz++\_sample $\ $},\huge {\tt CBS\_sample}}
            \end{axis}
        \end{tikzpicture}
        \caption{Average initializing and sampling time of EP/Zz++ and CBS for all $3 \leq p,q \leq 9$ ($T=10^5$).}
    \label{fig:init_sample_cmp}
\end{figure}

% \vspace{-3ex}
\subsection{Statistical of the Hyper-parameters}
\label{sec:hyperparameters}
Recall that in \texttt{CBS} and \texttt{EP/Zz++}, when $\epsilon$ is fixed, their required sample sizes are proportional to $\frac{B^2}{C^2}$ and $\frac{Z^2}{\rho^2}$, respectively. Therefore, the size of these two values in the graph data determines the efficiency of the two algorithms. We conduct an experiment to compare these two values in real-world data. Specifically, as shown in \Cref{tab:dataset_stat}, we report the values of $\frac{B^2}{C^2}$ and $\frac{Z^2}{\rho^2}$ on the Amazon dataset for varying $p$ and $q$. Similar trends are observed across other datasets.
Based on \Cref{tab:dataset_stat}, the value of $\frac{B^2}{C^2}$ is consistently smaller than $\frac{Z^2}{\rho^2}$ for all pairs of $p$ and $q$, up to 100$\times$. Therefore, our algorithm \texttt{CBS} consistently requires fewer samples than \texttt{EP/Zz++}. This also explains why \texttt{CBS} achieves the same or even lower estimation error with a smaller sample size. For instance, \texttt{CBS} requires up to 88$\times$ fewer samples than \texttt{EP/Zz++} on $p$ = 5 and $q$ = 9, showing its efficiency in reducing sampling overhead while maintaining accuracy.

\begin{table}[t!]
\caption{The value of $\frac{B^2}{C^2}$ and $\frac{Z^2}{\rho^2}$ with varying $p,q$ (Amazon).}
\resizebox{0.45\linewidth}{!}{
\begin{tabular}{l|rr} 
\bottomrule 
$(p,q)$ & \multicolumn{1}{c}{\footnotesize $\frac{B^2}{C^2}$} & \multicolumn{1}{c}{\footnotesize $\frac{Z^2}{\rho^2}$} \\ 
\midrule
(3,4)&	1.40E+03&	1.79E+03\\
(3,5)&	3.97E+03&	1.69E+04\\
(3,9)&	4.55E+04&	2.85E+05\\
(4,5)&	5.41E+04&	5.98E+04\\
(4,8)&	9.27E+04&	2.48E+05\\
(5,3)&	3.96E+02&	6.17E+03\\
(5,6)&	4.04E+05&	1.20E+06\\
(5,9)&	3.77E+04&	3.34E+06\\
(6,4)&	5.73E+03&	5.80E+04\\
(6,7)&	6.42E+06&	5.32E+07\\
(7,4)&	8.07E+03&	1.54E+05\\
(7,7)&	7.15E+06&	2.44E+07\\
(8,4)&	1.36E+04&	3.16E+05\\
(8,8)&	4.74E+07&	1.04E+09\\
(9,4)&	2.36E+04&	5.32E+05\\
(9,9)&	2.92E+08&	8.01E+10\\
\bottomrule 
\end{tabular}
}
\label{tab:dataset_stat}
\end{table}

%-----------memory comsumption------------
\vspace{-2ex}
\subsection{Memory Consumption}
\label{sec:memory}
The theoretical guarantee of the memory usage is given in \Cref{theorem:memory}. We now present the actual memory consumption on each dataset in \Cref{tab:memory_consumption} (in Megabytes). Specifically, we report the average memory usage on each dataset for $3 \leq p,q \leq 9$. As we can observe, the memory usage is similar to all other baselines for most datasets. However, our memory consumption increases for the two large datasets. This is in line with our analysis, as it is dominated by $(p+q)|E|$. Our algorithm requires to store the intermediate result for retrieving the broom structure. In result, the memory is mainly for storing the dynamic programming table.
\begin{table}[t!]
\centering
\caption {Memory consumption of different counting algorithms (in MBs).}
\resizebox{0.8\linewidth}{!}{
\begin{tabular}{l|ccccc}
\bottomrule 
dataset & \texttt{BCList++} & \texttt{EPivoter} & \texttt{EP/Zz++} & \texttt{CBS} \\
\midrule
GH & 21.3 & 880.8 & 21.3 & 462.1 \\
SO & 40.9 & 901.0 & 73.7 & 120.8 \\
Wut & 52.2 & 911.0 & 69.8 & 312.8 \\
IMDB & 66.5 & 926.0 & 106.1 & 164.2 \\
Actor2 & 88.3 & 946.9 & 103.1 & 268.5 \\
AR & 152.7 & 1004.5 & 295.8 & 461.0 \\
DBLP & 319.5 & 1166.3 & 444.5 & 718.2 \\
ER & X & X & 402.7 & 46600.6 \\
DE & X & X & 1744.8 & 195153.0 \\
\bottomrule 
\end{tabular}
}
\label{tab:memory_consumption}
\end{table}

%-----------scalability experiment------------
\vspace{-3px}
\subsection{Scalability Experiment}
\label{exp:scala}
We consider counting bicliques approximately, mainly because all exact counting methods are highly inefficient in large graph data. Therefore, it is essential to verify the scalability of our algorithm thoroughly. Besides the main experiment in \Cref{sec:exp}, we now conduct an additional graph scalability experiment that evaluates performance under dynamic modifications, which is done by randomly removing a percentage of edges. Specifically, we construct several graphs from DE as follows: 
\begin{enumerate}[leftmargin=*]
    \item Randomly sample $60\%$, $80\%$, $100\%$ of the vertices on both sides and keep the induced graph.
    \item Execute our algorithm on $p=6,q=8,T=10^5$.
    \item Repeat (1) and (2) 10 times, compute the average runtime for initialization and sampling.
\end{enumerate}

The three constructed graphs are DE60, DE80, and DE100. Shown in \Cref{tab:Scalability}, we can observe that as the size of the graph increases, the sampling time also increases linearly while the error ratio is within a certain range. This shows that \texttt{CBS}'s scalability, which is expected to be particularly effective on large datasets.

\begin{table}[t]
\centering
\caption {Scalability experiment.}
\resizebox{0.6\linewidth}{!}{
\begin{tabular}{l|ccc}
\bottomrule
& init time(s) & sample time(s) & error(\%) \\
\midrule
DE60\% & 298.6 & 40.8 & 3.76 \\
DE80\% & 1216.7 & 95.3 & 2.49 \\
DE100\% & 2478.1 & 135.5 & 3.19 \\
\bottomrule 
\end{tabular}
}
\label{tab:Scalability}
\end{table}

\vspace{2ex}
\subsection{Efficiency Comparison on Small $p,q$}
\label{exp:smallpq}
One of our major baselines, \texttt{EPivoter} is designed for exact counting on small $p,q$ (e.g., $p,q\leq5$). However, our major comparison experiment is conducted for all combinations of $p,q \leq 9$. To bridge the gap, we provide an additional experiment using the same parameter settings as \texttt{EPivoter}, i.e., $p, q \leq 5$.
Shown in \Cref{tab:small_cmp}, for ER and DE, \texttt{EPivoter} can not terminate in $ 10^5$ seconds. For other datasets, we observe that our method maintains its performance advantage even when $p$ and $q$ are small, highlighting its robustness to parameter settings. These additional results confirm that our method performs consistently well regardless of the parameter choices.

\begin{table}[t]
\centering
\caption{Average runtime (s) of EPivoter and CBS for all $3 \leq p,q \leq 5$.}
\resizebox{0.5\linewidth}{!}{
\begin{tabular}{l|cc}
\toprule
\textbf{Dataset} & \textbf{EPivoter} & \textbf{CBS} \\
\midrule
GH               & 805.4             & 3.4          \\
SO               & 95.0              & 1.4          \\
Wut              & 2429.4            & 4.3          \\
IMDB             & 18.4              & 1.6          \\
Actor2           & 53.4              & 2.8          \\
AR               & 180.6             & 3.7          \\
DBLP             & 18.5              & 1.9          \\
ER               & INF               & 364.4        \\
DE               & INF               & 3782.1       \\
\bottomrule
\end{tabular}
}
\label{tab:small_cmp}
\end{table}
\vspace{-1ex}
\section{CONCLUSION}
\label{sec:conc}

In this paper, we tackle the problem of $(p,q)$-biclique counting in large-scale bipartite graphs, crucial for applications such as recommendation systems and cohesive subgraph analysis. To address scalability and accuracy issues in existing methods, we propose a novel sampling-based algorithm that leverages \emph{$(p,q)$-brooms}, special spanning trees within $(p,q)$-bicliques. By utilizing the graph coloring method and dynamic programming, our method efficiently approximates $(p,q)$-biclique counts with unbiased estimates and provable error guarantees. Experimental results on nine real-world datasets show that our approach outperforms state-of-the-art methods, achieving up to 8$\times$ error reduction and 50$\times$ speed-up. 
Interesting future work includes extending our method to dynamic bipartite graphs with evolving structures and exploring its application to counting motifs/cliques in heterogeneous information networks.

\vspace{-1ex}
\begin{acks}
Chenhao Ma was partially supported by NSFC under Grant 62302421, Basic and Applied Basic Research Fund in Guangdong Province under Grant 2025A1515010439, and the Guangdong Provincial Key Laboratory of Big Data Computing, The Chinese University of Hong Kong, Shenzhen. Weinuo Li and Can Wang were partially supported by the National Natural Science Foundation of China (No. 62476244), the Starry Night Science Fund of Zhejiang University Shanghai Institute for Advanced Study, China (Grant No: SN-ZJU-SIAS-001). Parts of this work was done while Jingbang was still at University of Waterloo.
\end{acks}
\clearpage
\bibliographystyle{ACM-Reference-Format}
\balance
\bibliography{References}
\newpage
\clearpage
% \newpage
% \appendix
% \input{99Appendix}
% \balance

\end{document}